\documentclass[a4paper,fleqn]{cas-sc}
\usepackage[english]{babel}

\usepackage{proof}
\usepackage{hyperref}
\usepackage{inference}
\usepackage{mathrsfs}

\usepackage{float}
\usepackage{enumitem}
\usepackage{amsmath}
\usepackage{textcomp}
\usepackage{amssymb}
\PassOptionsToPackage{usenames,dvipsnames}{xcolor}
\usepackage{xcolor}
\usepackage{stmaryrd}
\usepackage{graphicx}
\usepackage{subcaption}
\usepackage{xspace}
\usepackage{dsfont}
\usepackage{tikz}
\usetikzlibrary{decorations.pathreplacing,calc,math,angles,quotes} 
\usetikzlibrary{shapes.geometric}
\usepackage{pdftexcmds}
\usepackage{Latex/macros}
\usepackage{Latex/switch-macros}
\usepackage{Latex/tikz-macros}

\usepackage[most]{tcolorbox}

\usepackage[T1]{fontenc}
\usepackage[utf8]{inputenc}
\usepackage[numbers]{natbib}

\newcommand{\gray}[1]{\textcolor{gray}{#1}}

\newcommand\EQUAL{\gray{\eqcirc}}

\makeatletter
\newcommand{\sbullet}{%
  \hbox{\fontfamily{lmr}\fontsize{.55\dimexpr(\f@size pt)}{0}\selectfont\textbullet}}

\newcommand{\sring}{%
  \hbox{\fontfamily{lmr}\fontsize{.55\dimexpr(\f@size pt)}{0}\selectfont\ensuremath{\circ}}}
\makeatother

\newcommand{\exprdex}[1][x]{e_{#1}}
\newcommand{\fix}[1][S_1]{\rond{\mathsf{fix}}_{#1}}
\newcommand{\FIX}[2]{\sol(\fix[#1](#2))}
\newcommand{\mult}{\cdot}

\newcommand{\Act}{\mathsf{Act}}
\newcommand{\Reac}{\mathsf{Reac}}
\newcommand{\Pro}{\mathsf{Pro}}
\newcommand{\Inh}{\mathsf{Inh}}

\newcommand{\Continue}{\mathcal{C}\tau}
\newcommand{\slopeset}{\Sset_d}
\newcommand{\slope}{{\s}_d}
\newcommand{\Absdiff}[2]{\mathsf{diff}_{#1}(#2)}

\newcommand{\Absfun}{I_{\slopeset}}
\newcommand{\Absimage}[1]{\Absfun(#1)}
\newcommand{\Stcoef}[2]{\mathsf{stoich}_{#1}(#2)}
\newcommand{\ADE}[1]{\mathsf{ADE}(#1)}

\newcommand{\rondcnextstep}[1][\s]{\rond{\mathsf{cnext}}^{#1}}
\newcommand{\rondnextstep}[1][]{\rond{\mathsf{next}}^{#1}}
\newcommand{\nextstep}[1][]{\mathsf{next}^{#1}}
\newcommand{\causalsolution}{\mathsf{csol}}
\newcommand{\fexpr}[1][e]{\mathsf{fun}_{[#1]}}

\newcommand{\labels}{\mathcal{L}}

\providecommand{\R}{\mathbb{R}}
\newcommand{\Fun}{\mathbb{F}}
\newcommand{\Rarith}{\mathbb{R}_{\mathsf{arith}}}
\newcommand{\Romega}{\mathbb{R}_{\Omega}}
\newcommand{\Nat}{\mathbb{N}}

\renewcommand{\Succ}[1][\slope]{\mathsf{succ}^{#1}}
\newcommand{\SuccE}[1][\delta]{\Succ[#1,\R]}
\newcommand{\SuccS}{\Succ[\s]}

\newcommand{\idd}{ \mathsf{id} }
\newcommand{\homo}{h_{\Sset}}

\newcommand\dom[1]{\mathsf{dom}({#1})}
\newcommand{\DomF}[1]{\mathcal{D}_{#1}}
\newcommand{\constr}[1]{\rond{\Phi}_{#1}}
\newcommand\rond[1]{\mathring{#1}}

\newcommand{\s}{\mathbb{S}}

\newcommand{\FVN}{U}

\newcommand{\Symbs}{\Omega}
\newcommand{\cSeg}{\mathcal{O}_f}
\newcommand{\struct}{\mathcal{S}}

\newcommand\Vars{\mathcal{V}}

\newcommand{\formulas}[2][\dotsymbs] {\mathcal{F}_{#1}(#2)}
\newcommand{\ATrans}[2]{\mathcal{T}_{#1}(#2)}
\newcommand{\Sset}{\mathbf{S}}

\newcommand{\sol}[1][\Fun]{\mathsf{sol}^{#1}}
\newcommand{\solrond}[1][\Fun]{\rond{\mathsf{sol}}^{#1}}
\newcommand{\Expr}[1][\Sigma_{arith}]{\mathsf{Expr}_{#1}}
\newcommand\Rels{\Sigma}
\newcommand\Consts{\mathbf{C}}

\newcommand{\Dplus}{\mathrm{D}^+}
\newcommand\ignore[1]{}

\newcommand\set{\mathsf{set}}
\newcommand{\crochet}[2]{[\![ #1 ]\!]^{#2}}
\newcommand\fv{\mathsf{fv}}
\newcommand{\simpl}{\mathsf{simpl}}
\newcommand{\Nsimpl}{N_{\simpl}}
\newcommand{\NAct}{N_{\Act}}
\newcommand{\NInh}{N_{\Inh}}
\newcommand{\asimpl}{\alpha}
\newcommand{\aAct}{\beta}
\newcommand{\aInh}{\gamma}
\newcommand{\rAct}{\mathsf{rAct}}
\newcommand{\rInh}{\mathsf{rInh}}
\newcommand{\restr}[2]{{
  \left.\kern-\nulldelimiterspace 
  #1 
  \right|_{#2} 
  }}

\newcommand{\sem}[2]{\llbracket {{#1}} \rrbracket^{{#2}}}

\newtheorem{example}{Example}
\newtheorem{definition}{Definition}
\newtheorem{theorem}{Theorem}

\newtheorem{property}{Property}
\newtheorem{lemma}{Lemma}
\newproof{proofsketch}{Sketch of Proof}
\newproof{proof}{Proof}

\begin{document}
	\let\WriteBookmarks\relax
	\def\floatpagepagefraction{1}
	\def\textpagefraction{.001}
	
	\shorttitle{}    
	
	\shortauthors{}  
	
	\title [mode = title]{Abstract Simulation of Reaction Networks}

	
	%
	
	\author[1]{Marie-Eva Fabri}[orcid=0009-0004-1551-241X]
	
	\ead{marie-eva.fabri@univ-lille.fr}
	
	\author[2]{Joachim Niehren}
	
	\author[1]{Sara Riva}
	
	\author[1]{Cristian Versari}

	
	\affiliation[1]{organization={Univ. Lille, CNRS, Centrale Lille, UMR 9189 CRIStAL},
		postcode={F-59000 Lille}, 
		country={France}}
	
	\affiliation[2]{organization={Univ. Lille, CNRS, Inria, Centrale Lille, UMR 9189 CRIStAL},
		postcode={F-59000 Lille}, 
		country={France}}
	
	\cortext[1]{Corresponding author}
	
	\begin{abstract}
		Reaction networks model reactions between a finite set of species. 
		These networks can be associated with different semantics, depending on the type of analysis and the phenomena under study.
		The standard continuous semantics is given by a system of differential equations based on the kinetic expressions of the reactions.
		To simulate a network under this semantics, the full knowledge of the kinetic laws of each reaction and the initial concentrations of each species is necessary. 
		Since in empirical settings the quantitative information about the reactions can be partially or totally unknown, the challenge is to introduce new semantics that can still be applied. 
		In this direction, a recent approach in the state of the art concerning Reaction Networks proposes a qualitative abstraction that is too coarse to properly capture the time-course continuous behaviour.
		Starting from the ideas of this approach, in this paper we first introduce the \emph{causal continuous semantics} for Reaction Networks to capture their continuous-time dynamics, preserving the causality hidden inside each transition.
		Later, we introduce the \emph{differential sign semantics} to abstract in a qualitative way the behaviour of a system under the causal continuous semantics.
		We show that our new method, based on abstract interpretation, yields appropriate Boolean transition graphs that refine those provided by the previous approach.	
	\end{abstract}

	\begin{keywords}
		Chemical Reaction Network\sep Abstract Interpretation\sep Dynamical Systems\sep
	\end{keywords}
	
	\maketitle
	
	\section{Introduction}

Reaction networks constitute a standard formalism for modeling biological systems. The mathematical theory of chemical reaction networks was introduced in the late 1970s, independently by Feinberg~\cite{feinberg_complex_1972} and by Érdi and Tóth~\cite{erdi_mathematical_1989}.
A reaction network is defined by a finite set of species together with a finite set of reactions. Each reaction consumes and produces multisets of species and is associated with a kinetic expression. 
Reaction networks can be given different semantics. The continuous dynamics of a reaction network is given by its deterministic semantics, i.e. a system of ordinary differential equations (ODEs) based on kinetic expressions of the reactions \cite{segel_modeling_1984}. The stochastic semantics interprets the model as a Markov chain where kinetic expressions provide the transition rates \cite{gillespie_exact_1977}. 
The continuous semantics can be obtained from the stochastic one (by mathematical limit operations where the number of molecules tends to infinity and the time steps tend to zero) under several assumptions. 
The discrete and the Boolean semantics, in contrast, are based on a qualitative interpretation of the model discretising time and species amounts \cite{gillespie_exact_1977}\cite{chaouiya_petri_2007}. All these semantics can be useful depending on the type of analysis and phenomena but, in empirical settings, where the quantitative information about the reactions of a system is partially or totally unknown,
only the discrete and Boolean semantics can be applied. The most extreme case of missing information is represented by partial reaction networks, where not only exact information about concentrations and kinetics parameters is not available, but also the kinetic expressions of the reactions are partially unknown. 

The stochastic, discrete and Boolean semantics are related through a well-defined abstraction hierarchy~\cite{fages_abstract_2008}. Their comparison with the continuous semantics is notoriously difficult.
It is indeed challenging to generate simulations approximating the continuous-time behaviour of the system and the synchronous characteristic of this semantics. 
Several approaches have been proposed to overcome this difficulty and to relate the continuous semantics to more qualitative interpretations of reaction networks.

Until now, the only known qualitative method that guarantees the inclusion of the continuous behaviour is the Most Permissive semantics \cite{pauleve_reconciling_2020}. 
However, this method is not specifically adapted to reaction networks under continuous semantics, making it overly general for this kind of systems. The qualitative approach proposed in \cite{niehren_abstract_2022} guarantees the inclusion of a variation of the continuous semantics that preserves causality but is too coarse to detect the difference between enzymatic activators and inhibitors, limiting its practical applicability. Nevertheless, this approach appears promising for the analysis of partial reaction networks.

In this paper, we tackle the challenge of defining a qualitative abstraction which captures the continuous dynamics by also taking causality into account, as well as by distinguishing between activations and inhibitions in enzymatic reactions. The key idea to achieve this goal is to distinguish the cause of a change in species concentrations from the change itself, the first being given by the derivative of the species in the corresponding ODE, the second being captured by the sign of its differential.
Thanks to this distinction, we introduce a qualitative interpretation of differential equations.

The paper is organised as follows. Section 2 introduces relational structures and reaction networks and presents two existing approaches to simulate reaction networks which are both based on {\em evolution constraints}, i.e. a set of rules approximating the solutions to the ODEs of the system.
Section 3 develops the abstraction of continuous functions.
In this part, we introduce {\em causal continuous predictions}, a modified set of solutions to the ODEs of the system consistent with the evolution constraints and representing a novel proposal of time-course modelling, which is both continuous and causality-aware.
In Section 4, we present the differential sign structure $\slope$, refining the sign structure presented in \cite{niehren_abstract_2022} by the addition of information about the differential of the variables. 
The abstraction of causal continuous predictions in this structure provides us with the first qualitative dynamics which is both aware of causality and formally derived from a continuous semantics.  
By applying abstract interpretation to evolution constraints, we obtain the {\em differential sign successor}, an abstract relation which we show to be readily computable by constraints solving and to provide a refined over-approximation of causal continuous predictions.
In the end, thanks to this approach, we obtain a computable abstraction of the dynamics that includes all causal continuous predictions of reaction network.
Throughout the paper, we will use the Lotka-Volterra predator–prey model as a running example to illustrate our constructions.

\begin{example}
	\label{ex:intro}
	The Lotka-Volterra predator–prey model \cite{bacaer_lotka_2011} consists of the 
	$$
	\begin{array}{l}
		\frac{dx}{dt} = \alpha x - \beta x y,\\
		\frac{dy}{dt} = \gamma x  y - \delta y.\\
	\end{array}
	$$
	It is used to describe the dynamics of a biological system in which two species interact (one as a predator and the other one as a prey). 
	In the equations above, $x$ is the population density of preys (for example rabbits) while $y$ is the population density of predators (for example foxes) and $\alpha,\beta,\gamma$ and $\delta$ are strictly positive parameters.
	
	Solutions of this kind of systems oscillate periodically except for two steady points: $(x=0,y=0)$ and $(x = \frac{\delta}{\gamma}, y = \frac{\alpha}{\beta})$.
	
\end{example}

	\section{Preliminaries}

\subsection{Relational structures}

We introduce the notion of relational structures defined over ranked signatures. This notion will serve as the semantic foundation for the interpretation of arithmetic expressions.

\smallskip 

A ranked 
signature is a set of symbols, each with an arity.
\begin{definition}[Ranked signature]\label{def: signature}
	A \emph{ranked signature} $\Sigma=((\Rels^{(n)})_{n>0},\Consts)$ consists
	of a set of symbols $\Consts$ (constants) and for each $n>0$ a set of symbols $\Rels^{(n)}$.
\end{definition}
The  symbols of $\Consts$ are interpreted as constants and those of $\Rels^{(n)}$ as operators of arity $n$.

We will only be interested in relational signatures $\Sigma$
such that the arity of the operators is at most $2$, i.e.
$\Sigma^{(n)}=\emptyset$ for all
$n> 2$.

\begin{definition}[Relational $\Sigma$-structure]\label{def: structure}
	A \emph{relational structure} $\struct$ over a signature $\Sigma$ consists of a domain together with an interpretation of each symbol in $\Sigma$ as either a relation or a constant.
	Formally, a \emph{$\Sigma$-structure} $\struct$ is a tuple
	$ ( \dom{S}, (\odot^\struct)_{\odot \in \Sigma^{(1)}\cup\Sigma^{(2)}} , (k^\struct)_{k \in \Consts })$, such that:
	\begin{itemize}
		\item the domain $\dom{\struct}$ of $\struct$ is a set;
		\item for $n \in \{1,2\}$ and any operator $\odot \in \Sigma^{(n)}$, the relation $\odot^\struct\subseteq \dom{\struct}^{n+1}$ has arity $n+1$;
		\item for any constant $k \in \Consts$, the value of the constant $k$ in $\struct$ is $k^{\struct} \in \dom{\struct}$.
	\end{itemize}
\end{definition}

Any relational interpretation of an operator $\odot \in \Sigma^{(n)}$ in the $\Sigma$-structure $\struct$ can be translated into a function $\odot^{\struct}_{\set}:\dom{\struct}^{n} \to \mathcal{P}(\dom{\struct})$ such that its image is a subset of the domain and, for any $(x_1,\dots,x_n)\in \dom{\struct}^{n}$,
$$
\odot^{\struct}_{\set}(x_1,\dots,x_{n}) = \{x\mid (x_1,\dots,x_n,x) \in
\odot^{\struct}\}.$$
A function can be seen as a particular kind of relation.
\begin{definition}[Partial and total functions]
	Let $\mathcal{R} \subseteq D^n$ be a $n$-ary relation over a set $D$, with $n \ge 2$. 
	If, for all $(x_1,\dots,x_{n-1}) \in D^{n-1}$, there exists at most one $y \in D$ such that $(x_1,\dots,x_{n-1},y) \in \mathcal{R}$,  
	then $\mathcal{R}$ is \emph{a partial function} from $D^{n-1}$ to $D$. We can write it as $\mathcal{R}: D^{n-1}\to D$.
	
	A partial function is said to be \emph{total} if each $(x_1,\dots,x_{n-1}) \in D^{n-1}$ is associated with exactly one $y \in D$.
\end{definition}
A classical arithmetic signature $\Sigma_{\mathrm{arith}}$ consists of the binary operators
$\Sigma_{\mathrm{arith}}^{(2)} = \{+, -, \times, /\}$,
the set of constants $\Consts^{\Sigma_{\mathrm{arith}}} = \mathbb{R}$,
and no unary operator, i.e. $\Sigma_{\mathrm{arith}}^{(1)} = \emptyset$.
Let us define the standard arithmetic structure over the real numbers.
\begin{definition}[Structure $\Rarith$]\label{def:Rarith}
	$\Rarith$ is the $\Sigma_{\mathrm{arith}}$-relational structure such that:
	\begin{itemize}
		\item $\dom{\Rarith} = \R$; 
		
		\item operators $\{+,-,\times,/\}$ are interpreted respectively as addition, subtraction, multiplication and division;
		\item each constant $k \in \R$ is associated with its value.
	\end{itemize} 
\end{definition}
\begin{example}\label{ex:operatorreal}
	According to Definition \ref{def:Rarith}, we have
	$3 +^{\Rarith}_{\set} 2 = \{5\}$
	and  $3 /^{\Rarith}_{\set} 0 = \emptyset$, which is equivalent to $$(3,2,x)\in +^{\Rarith} \Leftrightarrow x=5 ~~~ \text{and}~~~ \nexists y\in \R,~~ (3,0,y)\in /^{\Rarith}.$$
\end{example}

In the following, we will deal with solutions of differential equations in different kinds of structures extending $\Sigma_{\mathrm{arith}}$ with operators such as $\sqrt{(\cdot)}$, $exp(\cdot)$, $log(\cdot)$ that can be useful to describe the behaviour of the systems of interest. 
In general, any such operator can be included in the signature as long as its standard interpretation over real numbers is differentiable and the derivative is continuous on its whole domain of definition. 
For example, the standard interpretation of $log(\cdot)$ over $\R$ is the logarithmic function which is differentiable and has continuous derivative over its domain of definition $\R_+/\{0\}$. 
This property is captured by the following definition.
\begin{definition}[Standard signature $\Symbs$]\label{def: actualsignature}
	Let $\Symbs$ be a signature. We say that $\Symbs$ is a \emph{standard signature} if
	\begin{itemize}
		\item $\Symbs$ includes all operators and constants of $\Sigma_{\mathrm{arith}}$;
		\item for any operator $\odot \in \Symbs$, there is a standard interpretation over $\R$ that is differentiable and has continuous derivative over its domain of definition. 
	\end{itemize}
	
\end{definition} 
Let $\Symbs$ be a standard signature.
\begin{definition}[Structure $\Romega$]\label{def:Romega}
	We define $\Romega$ as the structure giving the standard interpretation of $\Symbs$ over $\R$. Specifically, for all arithmetical operators $\odot \in \Sigma_{arith}^{(2)}$, $\odot^{\Romega} = \odot^{\Rarith}$.
	
\end{definition}
Throughout this paper, the structure $\Romega$ will be the only one considered over real numbers. 
By abuse of notation, we will henceforth denote this structure by $\R$.

\smallskip

In order to rigorously define solutions of differential equations, we extend the signature $\Symbs$ to include a symbol representing the derivative.
We denote by $\dot{\Symbs}$ the signature given by $\dot{\Symbs} = \Symbs\cup\{\dot{}\}$, where $(\dot{~})$ is the symbol of arity $1$ chosen to represent the derivative. 
We are now able to define the  $\dot{\Symbs}$-structure of partial functions.
For each function $f$, we denote its domain of definition by $\DomF{f}$.

\label{page:fun}
\begin{definition}[The $\dot{\Symbs}$-structure  $\Fun$]\label{def:fun}
	We denote by  $\Fun$ the $\dot{\Symbs}$-structure such that:
	\begin{itemize}
		\item[(1)] $\dom{\Fun}= \{f : \R\to \R\}$ is the set of functions whose domain and codomain are subsets of $\R$;
		\item[(2)] for all constant $k\in \Consts$ and $x \in \R$, $ k^{\Fun}(x) = k^{\R}$;
		\item[(3)] for all $f,g \in \dom{\Fun}$, 
		$$(f,g)\in (\dot{~})^{\Fun}\Leftrightarrow 
		\left\{\begin{array}{l}
			\DomF{g} = \DomF{f},\\\forall x\in \DomF{g},~ g(x) = \frac{df}{dx}(x);
		\end{array}\right.$$
		
		\item[(4)] for all $\odot \in \Symbs^{(1)}$ and $f,g\in \dom{\Fun},$ 
		$$g \in \odot^{\Fun}_{\set}(f)\Leftrightarrow 
		\left\{\begin{array}{l}
			\DomF{g} = \DomF{f},\\\forall x\in \DomF{g},~ g(x) \in \odot^{\R}_{\set}f(x);
		\end{array}\right.$$	
		\item[(5)] for all $\odot \in \Symbs^{(2)}$ and $f,g,h \in \dom{\Fun},$ 
		$$h \in (f\odot^{\Fun}_{\set}g)\Leftrightarrow 
		\left\{\begin{array}{l}
			\DomF{h} =  \DomF{f}\cap\DomF{g},\\\forall x\in \DomF{h},~ h(x) \in f(x)\odot^{\R}_{\set}g(x).
		\end{array}\right. $$	
	\end{itemize} 
\end{definition}

Let $I \subseteq \R$ be a finite union of intervals and $\mathcal{C}^1(I)$ be the set of functions that are differentiable over $I$ with a continuous derivative.
For any $f\in \mathcal{C}^1(I)$, we say that $f$ is of class $\mathcal{C}^1$ over $I$. When the context is clear, by abuse of notation we sometimes 
write that $f$ is of class $\mathcal{C}^1$ by omitting $I$.
We denote as $\restr{f}{I} $ the function $f$ restricted to the domain $I \subseteq \DomF{f}$. 

\begin{example}\label{ex:fun}
	The function $5^{\Fun}$ satisfies $\forall x \in \R,\; 5^{\Fun}(x) = 5$.
	It is also the unique function $f$ such that $(3^{\Fun}, 2^{\Fun}, f) \in +^{\Fun}.$
	
	Let $\idd$ denote the identity function over $\R$ and let $f, g \in \dom{\Fun}$ be two functions such that
	$$
	(\idd, \idd, f) \in +^{\Fun}, \text{ and }
	(\idd, \idd, g) \in \mult^{\Fun}.
	$$
	The function $f$ is the total function that maps each input to twice its value, while $g$ is the total function that maps each input to its square. 
	
	Both solutions are unique because every operator in $\Fun$ is interpreted as a partial function.
	Notice that $ \idd, f, g \in \mathcal{C}^1(\R)$. 
	
	On the other hand, the image of $\idd /^{\Fun}_{\set} \idd$ is empty, since $0 \in \DomF{\idd}$ and $0/^{\R}_{\set} 0 = \emptyset$, which implies that there is no function $h \in \dom{\Fun}$ such that $(\idd,\idd,h)\in /^{\Fun}$. If we consider $I = \R \setminus \{0\}$ and $\idd^\star = \restr{\idd}{I}$, then we will have a solution $1^\star= \restr{1^{\Fun}}{I}$. In fact,
	$(\idd,\idd^\star,1^\star)\in /^{\Fun}.$
\end{example}

We now define the sign-based structure \cite{cousot_abstract_1977,sintzoff_calculating_nodate}, which was used to interpret qualitatively reaction networks consistently with their quantitative interpretation~\cite{niehren_abstract_2022}. 

\smallskip
\label{page:signset}
Let $\Sset$ be the sign set  $ \{-1,0,1\}$.
Let $\homo: \R \to \Sset$ denote the function that maps each real number to its sign. 
For all $x \in \R$,
$$\homo(x) = \left\{\begin{array}{lll}
	~~1 & \text{if} & x>0, \\
	~~0 & \text{if} & x=0, \\
	-1& \text{if} & x<0. \\
\end{array}\right.$$
\begin{definition}[Sign structure  \cite{cousot_abstract_1977,sintzoff_calculating_nodate}]\label{def:signstruct}
	We denote by $\s$ the $\Symbs$-structure such that: 
	\begin{itemize}
		\item $dom(\s) = \Sset$;
		\item for all constants $k\in \Consts$, $k^\s = \homo(k^\R)$;
		\item for all operators $\odot\in \Symbs^{(n)}$ of arity $n >0$, 
		$$\odot^\s = \{(\homo(x_0),\dots, \homo(x_n))\mid (x_0,\dots,x_n)\in \odot^\R\}.$$
	\end{itemize}
\end{definition}
\begin{example} 
	As shown in Example~\ref{ex:operatorreal}, we have $(3,2,5)\in +^{\R}$. As $\homo(5) = \homo(3) = \homo(2) = 1$, we deduce that $(1,1,1)\in +^{\s}$.
	More generally, as the addition of two positives numbers is a positive number, it holds that
	$1 +^{\s}_{\set} 1 = \{1\}$. Similarly, as the result of multiplying any number by zero is zero, we also have $1~\mult^{\s}_{\set}~0 =~\{0\}$. 
	However, the result of adding two numbers with opposite signs is not unique in $\s$. In fact, we have, for example:
	
	$$\begin{array}{l}
		(3,-5,-2)\in +^{\s},\\
		(3,-3,0)\in +^{\s}, \text{ and }\\
		(3,-2,1)\in +^{\s}.
	\end{array}$$ 
	Therefore, $(1,-1,-1)$, $(1,-1,0)$ and $(1,-1,1)$ are in $+^{\s}$, or equivalently, $1 +^{\s}_{\set} -1 = \{1,0,-1\}$.
\end{example}

$\s$ is constructed from $\R$ by means of the function $\homo$. This ensures that the interpretation of operators in $\s$ consistently corresponds to their interpretation in $\R$. Functions like $\homo$ are called homomorphisms.

\begin{definition}[$\Sigma$-homomorphism]
	We define a $\Sigma$-homomorphism between two $\Sigma$-structures $\struct$ and
	$\struct'$ as a function $h:\dom{\struct}\to \dom{\struct'}$ such that:
	\begin{itemize}
		\item $\forall k \in \Consts, $
		$ h(k^\struct) = k^{\struct'}; $ 
		%
		\item $\forall \odot \in \Rels^{(n)}$ of arity $n\in\Nat$,
		$$\{(h(x_1),\dots, h(x_{n+1}))\mid (x_1,\dots,x_{n+1} ) \in \odot^{\struct} \} \subseteq  \odot^{\struct'}.$$     
	\end{itemize}
\end{definition}

\begin{example}\label{ex:homo}
	$\homo$  is a $\Symbs$-homomorphism from $\R$ to $\s$. The function $f: \dom{\R} \to \dom{\Fun}$, such that $\forall x \in \R, f(x) = x^{\Fun}$, is a $\Symbs$-homomorphism from $ \R$ to $\Fun$ restricted to the signature $\Symbs$.
\end{example}
	\subsection{Expressions and formulas}

\subsubsection{Expressions}\label{para: expr}
Let $\Sigma$ be a signature, $\Vars$ be the set of variables, and $V\subseteq \Vars$.
The set of terms representing the $\Sigma$-expressions 
is given by the following syntax:
$$
e_1,\ldots, e_n \in \Expr[\Sigma](V) \ ::= \ k\quad\big\vert\quad x \quad\big\vert\quad
\odot(e_1,\dots,e_n),
$$
where $k \in \Consts$, $x \in V$ and $\odot$ is an operator in $\Sigma^{(n)}$.
An assignement is a map $\alpha:V\to \dom{\struct}$ (also denoted as an element of  $\dom{\struct}^V$). 
Given a $\Sigma$-structure $\struct$ and an assignement $\alpha$, 
For any term $e\in\Expr[\Sigma](V)$, we denote by $\crochet{e}{\alpha,\struct}$
its interpretation in $\struct$ according to  $\alpha$, defined inductively in the following way:
\begin{itemize}
	\item  for all constants $k \in \Consts$, $\crochet{k}{\alpha,\struct} = \{k^\struct\}$;
	\item for all variables $x\in V$, $\crochet{x}{\alpha,\struct} = \{\alpha(x)\}$;
	\item for all $\odot\in\Sigma^{(n)}$ (with $n>0$) and $e_1,\ldots e_n\in\Expr[\Sigma](V)$,
	$$\crochet{\odot(e_1,\dots, e_n)}{\alpha,\struct} = \bigcup_{(v_1,\dots v_n)\in \crochet{e_1}{\alpha,\struct}\times\dots\times\crochet{e_n}{\alpha,\struct}} \odot^S_{\set}(v_1,\dots,v_n).
	$$
\end{itemize}

\begin{example}
	We have $\crochet{x+y}{[x/3,y/3],\R} = \{6\}$ and
	$\crochet{x-y}{[x/3;y/3],\R} = \{0\}$. 
	Then, we can interpret a more complex expression like $$
	\begin{array}{ll}
		\crochet{\frac{x+y}{x-y}}{[x/3;y/3],\R}~	&=~~ \bigcup_{(v_1,v_2)\in\{5\}\times\{0\}}   ~~~v_1 /^\R_{\set} v_2 \\ 
		~ &=~~ 5/^\R_{\set}0\\
		~ &=~~ \emptyset.
	\end{array}$$ 
\end{example}

For any $\Symbs$-expression $e$ over a set of variables $V$,
let $\fexpr:~\R^V\to~\R$ be the function that maps an assignment  $\alpha:V\to \R$ to
the interpretation of $e$ in $\R$ with $\alpha$, i.e.\ 
$$\fexpr(\alpha) = y \quad\text{ when }\quad \{y\} =\crochet{e}{\alpha,\R }.$$

For example, it holds that $\fexpr[x\mult y]([x/3,y/2]) = 6$.

\subsubsection{Formulas}\label{para: formulas}
The set of $\Sigma$-formulas $\formulas[\Sigma]{V}$ is set
of terms defined by the grammar:

$$\phi, \phi' \in \formulas[\Sigma]{V} ::=  \quad\phi \wedge \phi' \quad\vert\quad e_1 \EQUAL e_2,$$
where $e_1$ and $e_2$ are elements of $\Expr[\Sigma](V)$.

The equality operator in set-valued expressions is different from the equality between sets, as it is interpreted 
non-deterministically, i.e. as a non-empty intersection. 

For all
$ e_1,~e_2~\in~\Expr[\Sigma](\Vars)$ and
$ \phi_1,\phi_2~\in~\formulas[\Sigma]{\Vars}$, we have:
$$
\begin{array}{l}
	\crochet{e_1 \EQUAL e_2}{\alpha,S} =
	\left\{
	\begin{array}{ll}
		\mathsf{False} ~~~&\textrm{if } \crochet{e_1}{\alpha,S}\cap \crochet{e_2}{\alpha,S} = \emptyset  \\
		\mathsf{True} & \textrm{otherwise};\\
	\end{array}
	\right.
	\\~\\
	\crochet{\phi_1 \wedge \phi_2}{\alpha,S} = \crochet{\phi_1 }{\alpha,S} \wedge
	\crochet{\phi_2}{\alpha,S}.
	\\
	
\end{array}
$$
We denote by $\fv(\phi)$ the set of free variables occurring in $\phi$.  
Given a structure~$\struct$, we denote $\sol[\struct](\phi)$ the set of assignments over $\fv(\phi)$ satisfying $\phi$, that is to say such that $\phi$ evaluates to $\mathsf{True}$ when interpreted in~$\struct$, i.e
$$
\sol[^\struct](\phi) = \   \{\alpha:\fv(\phi)\to \dom{\struct} \mid \sem{\phi}{\alpha,\struct} = \mathsf{True} \}.$$

\begin{example}
	Let us consider the $\Symbs$-formula
	$\phi := ~x+2 \EQUAL 1-2$. The set of solutions in $\R$ is $\sol[\R](\phi) = \{[x/-3]\}$.  In $\s$, we have:
	\begin{itemize}
		\item $\crochet{x+2}{[x/1],\s}\cap \crochet{1-2}{[x/1],\s} = \{1\}$ and consequently $[x/1] \in \sol[\s](\phi); $
		\item $\crochet{x+2}{[x/0],\s}\cap \crochet{1-2}{[x/0],\s} = \{1\}$ and consequently $[x/0] \in \sol[\s](\phi); $	
		\item $\crochet{x+2}{[x/-1],\s}\cap \crochet{1-2}{[x/-1],\s} = \{-1,0,1\}$ and consequently $[x/-1] \in \sol[\s](\phi).$	
	\end{itemize}
	Indeed, $\sol[\s](\phi) = \{[x/-1];[x/0];[x/1]\}$, even if the unique solution in $\R$ is negative.
	
\end{example}

	\subsection{Reaction networks}
\label{page:rn}
In this section, we formalise chemical reaction networks that have been introduced by Feinberg in \cite{feinberg_complex_1972}. 
A reaction network models the interactions between a set of chemical species by means of reactions, each one transforming a multiset of reactants into a multiset of products according to a kinetic expression that represents the rate of the reaction.

A multiset $St$ over a set of variables $S$ (that in our case represent species) is a function $S
\to \Nat$ that assigns a multiplicity to each species  $x \in S$. 
By abuse of notation, we can consider the domain of a multiset as the set of elements with non-null image: $\dom{St} = \{x\mid St(x)\neq 0\}$.
When $St(x) = 0$, we say that $x$ does not belong to $St$. 

\begin{definition}[Reactions]\label{def:reaction}
	A \emph{reaction} over a finite
	set of species $S$
	is a tuple $(r, \Reac, \Pro, e )$ where $r$ is the label of the reaction, $\Reac$ is the multiset of reactants, $\Pro$ the multiset of products
	and $e$, called kinetic expression, is an expression in $\Expr[\Symbs](S)$.
\end{definition}
We expect the functions induced by these kinetic expressions to satisfy a number of reasonable conditions. 
First, reactions cannot be active when one or more reactants are not available. 
Second, the functions should be positive.
\begin{definition}[Strict property \cite{fages_inferring_2015}]\label{def: strict}
	A reaction $(r, \Reac, \Pro, e)$ over a set of species $S$ is said to be \emph{strict} if and only if, for every assignment $\eta : S \to \R_+$, whenever there exists a species $x \in S$ such that $\eta(x) = 0$ and $\Reac(x) > 0$, we have $\fexpr(\eta) = 0$.
\end{definition}
\begin{definition} [Positivity property \cite{fages_inferring_2015}]\label{def: positivity}
	A reaction $(r, \Reac, \Pro, e)$ over a set of species $S$ is said to be \emph{positive} if and only if, $\fexpr$
	is defined on an open set including $\R_+^S$ and $\fexpr(\R_+^S)\subseteq \R_+^S$.
\end{definition}
According to these definitions, we can now formally define a reaction network. 
\begin{definition}[Reaction network]\label{def:rn}
	A \emph{reaction network} $N = (S,R)$ consists of a
	finite set of species $S$ and a finite set of strict and positive reactions $R$. 
\end{definition}
In the following, $\labels (R)~=~\{r|(r,\Reac,\Pro,e)\in R\}$ will denote the set of reaction labels.\label{page:rn suite}
Since each label is associated with a unique reaction,
we will use the notation $\Reac_r$ to denote the multiset of reactants in the reaction labelled $r$.
Similarly, we will use $\Pro_r$ and $e_r$, respectively, to denote the multiset of products and the expression of the reaction. 
\begin{example}\label{ex:rn}
	The Lotka-Volterra equations can be seen as a reaction network. The species of the system are $Ra$ (Rabbits) and $Fo$ (Foxes), which gives us a set of variables $S_{LV}=\{Ra,Fo\}$. For simplicity, we set all parameters of the reactions to $1$. Moreover, since the multiplicity of species is either 0 or 1, we will represent them by their domain only. The set of reactions $R_{LV}$ consists of the following reactions:
	\begin{itemize}
		\item the birth of rabbits $(b,\emptyset,\{Ra\},Ra)$ with kinetic expression $Ra$
		consumes nothing and produces rabbits;
		\item the hunt $(h,\{Ra\},\{Fo\},Ra\mult Fo)$  with kinetic expression $Ra\mult Fo$ consumes rabbits and produces foxes; 
		\item the death of foxes $(d,\{Fo\},\emptyset,Fo)$ with kinetic expression $Fo$ consumes foxes and produces nothing. 
	\end{itemize}
	The graphical representation of $LV$ is given in Figure \ref{fig:rn}.

\end{example}
\begin{figure}
	\centering
	\begin{tikzpicture}
		
		\node[draw,circle] (L)at(2.5,0) {Ra};
		\node[draw,circle] (R)at(7.5,0) {Fo};
		\node[draw,rectangle,label=above:
		$\mathit{Ra}$] (B)at(0,0) {b} ;
		\node[draw,rectangle,label=above: $\mathit{ Ra}\cdot \mathit{Fo}$] (H) at (5,0) {h};
		\node[draw,rectangle,label=above: $\mathit{Fo}$] (D)at(10,0) {d};
		\draw[->,>=latex] (B) -- (L);
		\draw[->,>=latex] (L) -- (H);
		\draw[->,>=latex] (H) -- (R);
		\draw[->,>=latex] (R) -- (D);
		
	\end{tikzpicture}
	\caption{\label{fig:rn} The reaction network $LV$, defined in Example~\ref{ex:rn}, represents the Lotka-Volterra model with all parameters set to $1$.  
		Species are represented by round-shaped nodes and reactions by square-shaped nodes.  
		Reactants are indicated by arrows oriented from species to reactions, while products by arrows from reactions to species.  
		If a multiplicity is greater than $1$, it is shown as a label above the corresponding arrow.  
		Labels above reactions denote their kinetic expressions. }
\end{figure}
We denote the stoichiometry of the species $x$ in the reaction $r$ by $\Stcoef{r}{x}$.
The stoichiometry is defined as the difference between the multiplicity of $x$ as product and as a reactant, i.e. $$ \Stcoef{r}{x} =  \Pro_r(x) - \Reac_r (x). $$
\begin{example}
	
	In the table below, we provide the stoichiometry $\Stcoef{r}{x}$ for each species ($Ra$ and $Fo$) in each reaction ($b,h$ and $ d$) of the network $LV$. 	
	$$
	\begin{array}{|c|c|c|c|c|}
		\hline
		{x}\setminus{r}&~b~&~h~&~d~\\
		\hline
		
		Ra &  1 & -1 & 0\\
		\hline
		Fo& 0&1&-1\\
		\hline
	\end{array}	 	 
	$$
	We remark that, in the graphical representation of the network a missing arrow between a species $x$ and a reaction $r$ implies $\Pro_r(x) = \Reac_r (x) = 0$ and therefore $ \Stcoef{r}{x} = 0$. 
\end{example}
From a reaction network, we can derive a corresponding differential equation.

\begin{definition}[Systems of \textsc{ADEs} of reaction networks]
	\label{def:totalode}
	For any reaction network \( N = (S,R) \),
	we define the following system of algebraic differential equations   
	$$
	\ADE{N} = \bigwedge_{r \in \labels (R)} ( r \EQUAL e_r ) \bigwedge_{x \in S} (\dot{x} \EQUAL \exprdex)
	$$
	
	where $\exprdex = \sum_{r\in \labels (R)} \Stcoef{r}{x}\mult r$, while $\labels (R)$ is the set of reactions labels and $e_r$ the kinetic expression of the reaction labelled $r$. 
\end{definition}

The existence result for the above differential equation is provided in Appendix; see Proposition~\ref{prop: exists ADE} for a detailed statement and proof.

We observe that, even if the derivative of a species is defined as a sum over all reactions, its behaviour will only depend on the reactions where the stoichiometry of the species is not null. 

Let $\FVN=\fv (\ADE{N})$ denote the set of free variables of $\ADE{N}$.
In $\Fun$, the solutions of the differential equations defined above are of the form $f: \FVN\to (\R_+\to \R)$, i.e. functions mapping free variables to functions of $\dom{\Fun}$. 
It is however more convenient to view each such function $f$ as a function of time $g: \R_+\to (\FVN\to \R)$ mapping each time point $t \in\R_+$ to an assignment of variables $\FVN\to \R$, with no loss of information as $g(t)(x)=f(x)(t)$ for every $t$ and $x$. 

Therefore, with abuse of notation, when the second point of view is more convenient and the context is clear, we will interpret $f$ as an element of $\R_+\to (\FVN\to \R)$ by changing the order of its parameters and writing $f(t)(x)$ instead of $f(x)(t)$. 
We will also write $f_x(t)$ as a shortcut for $f(x)(t)$.

\begin{example}\label{ex:ODE-LV}
	We first express $\sum_{r\in \labels (R_{LV})} \Stcoef{r}{x}\mult r $ for each species $x$ in $S_{LV}$.
	Considering $Ra$, we have:
	\begin{equation}\label{eq: r1}
		1 \mult b + (-1)\mult h + 0 \mult d.
	\end{equation}
	Considering $Fo$, we have:
	\begin{equation}\label{eq: r2}
		0\mult b + 1\mult h + (-1)\mult d.
	\end{equation}
	For each structure considered before,  
	the expression (\ref{eq: r1}) is equivalent to $b-h$ and the expression (\ref{eq: r2}) is equivalent to  $h-d$.
	We now derive the ADE corresponding to $LV$, using these simplifications.
	\begin{equation}\label{eq:ADE-LV}
		\ADE{LV} =  (b \EQUAL Ra)\wedge (h \EQUAL Ra\mult Fo) \wedge (d \EQUAL Fo) \wedge (\dot{Ra} \EQUAL  b-h ) \wedge (\dot{Fo}\EQUAL h-d)
	\end{equation}
	
\end{example}

Since all $e_r$ are $\Symbs$-expressions, for any reaction network $N =(S,R)$, the system of equations $\ADE{N}$ is a $\dot{\Symbs}$-formula over the set of reactions and species, i.e. $\ADE{N}\in \formulas[\dot{\Symbs}]{\labels (R)\cup S}$, which can be interpreted in the structure $\Fun$.
The set of functions satisfying $\ADE{N}$ is $\sol(\ADE{N})$.The exact computation of solutions in  $\sol(\ADE{N})$, is not possible in general, but various numerical approximation methods have been proposed. \cite{euler_institutionum_1768,lapidus_numerical_1971}.
These methods are often based on discrete-time steps to approximate the evolution of species.
In such simulations, the derivative operator needs to be approximated to be able to compute the state of the system at the next time step.
Conceptually, this corresponds to replacing each derivative ($\dot{x}$) with a new free variable ($\rond{x}$) whose value depends on the current state of the system.
By applying this idea to the equations $\ADE{N}$, we obtain a system of constraints interpretable in the structure $\R$.

Formally, given a set of variables $V$, we denote by $\rond{V}$ a new set of variables in which,
for each $x \in V$, there is a corresponding variable $\rond{x}$ representing its derivative.
We now introduce the formula obtained by replacing derivatives with these variables as the evolution constraints.
\begin{definition}[Evolution constraints]\label{def: constraints}
	The set of evolution constraints, for a reaction network $N=(S,R)$, is 
	$$\constr{N} = \bigwedge_{r\in \labels (R)} (r\EQUAL e_r) \bigwedge_{x\in S} (\rond{x} \EQUAL \exprdex),
	$$
	where $\exprdex = \sum_{r\in \labels (R)} \Stcoef{r}{x}\mult r$, while $\labels (R)$ is the set of reactions labels and $e_r$ the kinetic expression of the reaction labelled $r$. 
\end{definition}

\begin{example}\label{ex:ODE-LV-bis}
	The evolution constraints of the LV model are:
	\begin{equation}\label{eq:constr-LV}
		\constr{LV}= (b \EQUAL Ra)\wedge (h \EQUAL Ra\mult Fo) \wedge (d \EQUAL Fo) \wedge (\rond{Ra} \EQUAL  b-h)\wedge(\rond{Fo}\EQUAL h-d).
	\end{equation}
	Let us consider an initial population of 10 rabbits and 2 foxes, i.e. $\alpha(Ra) = 10$ and $\alpha(Fo) = 2$. This gives $\alpha(\rond{Ra}) = -10$ and $\alpha(\rond{Fo}) = 18$ which implies a decrease in rabbits and an increase at foxes in the next time step.
\end{example}

	\subsection{Quantitative simulation}

A simple numerical simulation is Euler’s method \cite{euler_institutionum_1768,lapidus_numerical_1971} which discretises time and incrementally approximates the evolution of the variables of system based on their right-hand derivative, defined as the limit: $ \Dplus x(t) = \lim_{\delta\to0+} (x(t+\delta) - x(t))/\delta$. Such a limit can be approximated by choosing a sufficiently small  $\delta>0$, so that $ \Dplus x(t) \approx (x(t+\delta) - x(t))/{\delta}$. 
Thanks to this method, an approximated successor can be computed from any given state. 
We formalise this notion of successor for reaction networks.

\begin{definition}[Euler successor]\label{def: euler-succ}
	Let $N=(S,R)$ be a reaction network and $\delta>0$ be a time step. We define the successor $\SuccE(N)$ as the set of pairs $(\eta,\eta')\in \R^S\times\R^S$ such that:
	$$\exists \mu \in \sol[\R](\constr{N}), \forall x \in S, ~~\mu(x)=\eta(x) ~~\wedge~~ \eta'(x) \in \crochet{\rond{x}\mult \delta + x}{\mu,\R}.$$
\end{definition}

\smallskip Given a reaction network $N=(S,R)$ and a state $\eta \in \R^S$, there exists only one solution $\mu \in sol^{\R}(\constr{N})$  such that $ \forall x \in S$, $\mu(x)=\eta(x)$. 
In fact, since operators are interpreted as partial functions in $\R$, the evaluation of kinetic expressions at $\eta$ gives unique values.
Once these values are computed, they can be assigned to the variables representing reactions, which in turn allow the calculation of the values for the variables of the derivatives.
Therefore, even if the variables corresponding to the derivatives are free in $\constr{N}$, the successive state is uniquely defined.  
Thus, once a $\delta$ is chosen, it is possible to deterministically construct a simulation for any reaction network from any initial state.

\begin{example}
	We illustrate the computation of a successor pair $(\eta,\eta')$ for $LV$ with $\eta = [Ra/10;Fo/2]$ and a time step $\delta = 0.1$. 
	Let us recall the formula $\constr{LV}$ defined in Equation \eqref{eq:constr-LV}.
	$$\constr{LV} = (b \EQUAL Ra)\wedge (h \EQUAL Ra\mult Fo) \wedge (d \EQUAL Fo)\wedge (\rond{Ra} \EQUAL  b-h)\wedge(\rond{Fo}\EQUAL h-d)$$
	To compute an assignment $\mu$ over $\fv(\constr{LV})$ compatible with the initial state $\eta$ and the formula above, the following conditions must be met:
	\begin{itemize}
		\item since $\eta(Ra)=10$ and $\eta(Fo)=2$, we have $\mu(Ra)=10$ and $\mu(Fo)=2$;
		\item since  $\mu(Ra)=10$, $\mu(Fo)=2$, we have $\mu(b)=10$, $\mu(h)=20$ and $\mu(d)=2$;
		\item since $\mu(b)=10$, $\mu(h)=20$, $\mu(d)=2$, we have $\mu(\rond{Ra})=-10$ and $\mu(\rond{Fo})=18$.	
	\end{itemize}
	
	According to $\mu$, we can compute $\eta'$:
	$$	\begin{array} {l}
		\eta'(Ra) \in \crochet{\rond{Ra}\mult 0.1 + Ra}{\mu,\R} \Rightarrow \eta'(Ra) = 9\\
		\eta'(Fo) \in \crochet{\rond{Fo}\mult 0.1 + Fo}{\mu,\R} \Rightarrow \eta'(Fo) = 3.8.\\		
	\end{array}$$
	\label{ex:euler}
	The subsequent steps of this simulation are illustrated in Figure~\ref{fig:
		euler}, along with another simulation initiated from the  state $[Ra/1;Fo/1]$.
	
\end{example}
\begin{figure}
	\centering
	\includegraphics[width=5cm]{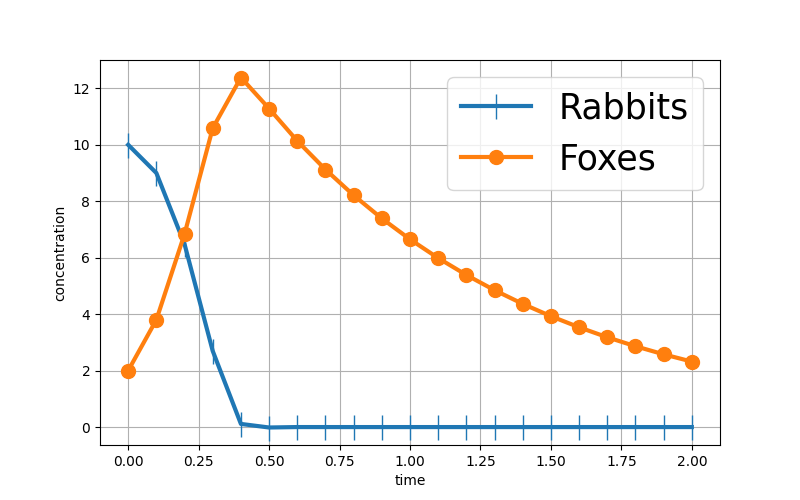}
	\includegraphics[width=5cm]{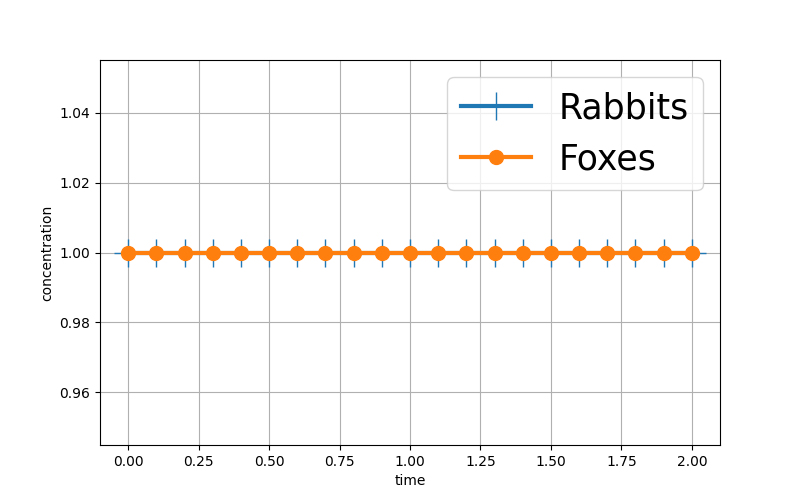}
	\caption{\label{fig: euler}Successive applications of the Euler successor on the reaction network $LV$ (Example~\ref{ex:rn}), with $\delta = 0.1$. On the left, the initial state is $[Ra/10; Fo/2]$. On the right, the initial state is $[Ra/1; Fo/1]$ and the two trajectories are superposed.} 
\end{figure}

In the scenario where all the quantitative information is available, numerical methods like Euler’s one are well suited for the analysis of reaction systems. 
However, in all the other scenarios, we need to exploit an abstraction approach to investigate the behaviour of the system.

	\subsection{Simulation graph over signs}
In \cite{niehren_abstract_2022}, the authors presented a sign-based abstract simulation for reaction networks applicable in the case of missing quantitative information about the initial state of the system. 
The simulation is based on the same idea of the Euler successor presented above, except for the fact that the constraints to model the evolution of the system are interpreted in the sign-based structure $\s$. 
Since the interpretation in $\s$ is non-deterministic, several states may be successors of a given state. 
Furthermore, we observe that the sign-abstraction for any $\delta > 0$ is always $1$. 
Thus, while there exist infinitely many Euler simulations depending on the choice of $\delta$, there is only one sign-based simulation of the Euler method. 

\smallskip

For a reaction network \(N = (S, R)\), it is possible to define a sign successor similar to Definition \ref{def: euler-succ}, as a relation between assignments over  \(S\). 

\begin{definition}[Sign successors]\label{def: sign-succ}
	For a reaction network $N = (S,R)$, we define the \emph{abstract successor relation} $\SuccS(N)$ as the set of pairs $(\alpha,\alpha') \in \Sset^\FVN\times \Sset^\FVN$, with $\FVN = S \cup\rond{S}\cup \labels (R)$, such that 
	$$ \alpha,\alpha' \in \sol[\s](\constr{N}) ~~\wedge~~ \forall x \in S,~~\alpha'(x)  \in \crochet{\rond{x} + x}{\alpha,\s} $$
	
	\noindent where $\rond{x}\in \rond{S}$ is the variable modelling the derivative of $x \in S$.
\end{definition}

Since the interpretation in the sign structure is non-deterministic, multiple successors may exist for any given state.
As a consequence, given an initial state, the behaviour of the system is not represented by a single trace but by a directed graph.

\ignore{However, with the sign structure, we will not always be able to deduce the sign of the derivative from the difference in species values between two consecutive steps. Indeed, if a species is assigned the value $1$ at two consecutive steps, we cannot determine the sign of its derivative. Therefore, we choose the sign successor to be a relation between solutions of \(\constr{N}\), which correspond to assignments over species variables, reaction variables, and derivatives of species variables.}

\begin{figure}
	\centering
	\begin{subfigure}{0.30\textwidth}
		\centering
		
		\begin{tikzpicture}[scale=2.5]
			\node[rectangle,draw,label=above:1] (r1Z) at (-3.5,0)	{r1};
			\node[circle, draw] (Z) at (-3,0) {A};
			\node[rectangle,draw,label=above:\it{A}] (r2Z) at (-2.5,0) {r2};
			\draw[->,>=latex] (r1Z) to (Z);		
			\draw[->,>=latex] (Z) to (r2Z);
		\end{tikzpicture}
		\caption{\label{fig:simpl} The network $\Nsimpl$. }
	\end{subfigure}
	\begin{subfigure}{0.30\textwidth}
		\centering
		\begin{tikzpicture}[scale=2.5]
			
			\node[rectangle,draw,label=above:\it{1}] (rAct) at (-1.5,0.5)	{rAct};
			\node[circle, draw] (Act) at (-1,0.5) {Act};
			\node[rectangle,draw,label=above:\it{Act}] (r1) at (-1.5,0)	{r1};
			\node[circle, draw] (A) at (-1,0) {A};
			\node[rectangle,draw,label=above:\it{A}] (r2) at (-0.5,0) {r2};
			\draw[->,>=latex] (r1) to (A);	
			\draw[->,>=latex] (rAct) to (Act);	
			\draw[->,>=latex] (A) to (r2);
			\draw[-o,>=latex, dashed] (Act.south) to (r1);
		\end{tikzpicture}
		\caption{\label{fig:Act} The network $\NAct$. }
	\end{subfigure}
	\begin{subfigure}{0.30\textwidth}
		\centering
		
		\begin{tikzpicture}[scale=2.5]
			\node[rectangle,draw,label=above:\it{1}] (rInh) at (0.5,0.5)	{rInh};
			\node[circle, draw] (Inh) at (1,0.5) {Inh};
			\node[rectangle,draw,label=above:$\mathit{\frac{1}{1+\Inh}}$] (r3) at (0.5,0)	{r1};
			\node[circle, draw] (B) at (1,0) {A};
			\node[rectangle,draw,label=above:\it{A}] (r4) at (1.5,0) {r2};
			\draw[->,>=latex] (r3) to (B);	
			\draw[->,>=latex] (rInh) to (Inh);	
			\draw[->,>=latex] (B) to (r4);
			\draw[-|,>=latex, dashed] (Inh.south) to (r3);		
		\end{tikzpicture}
		\caption{\label{fig:Inh} The network $\NInh$.}
	\end{subfigure}
	
	\caption{\label{fig:threeRN} The figure shows three different networks. On the left, $\Nsimpl$ consists only of the production and consumption of a species $A$. In the middle, $\NAct$ includes a regulatory interaction acting as an activator on the production of $A$. On the right, $\NInh$ includes a regulatory interaction acting as an inhibitor on the production of $A$.}
	
\end{figure}

\begin{example}
	By applying Definition \ref{def: sign-succ} to the network $\Nsimpl$ (Figure~\ref{fig:simpl}) starting from some initial state $\alpha \in \Sset^{\{A, \rond{A}, r1,r2\}}$, we can compute the set of sign successors for $\alpha$ (i.e, 
	$\{\alpha' \mid (\alpha,\alpha') \in \SuccS(\Nsimpl)\}$). We consider any initial state $\alpha$ where the species $A$ is present with a null derivative (i.e., $\alpha(A) = 1$ and $\alpha(\rond{A}) = 0$). 
	By Definition \ref{def: sign-succ}, we have that $\alpha,\alpha' \in \sol[\s](\constr{\Nsimpl})$ which implies: 
	$$
	\begin{array}{l}
		\alpha(r1) \in \crochet{1}{\alpha,\s} = \{1\}, \\
		\alpha(r2) \in \crochet{A}{\alpha,\s} = \{1\}. \\ 
	\end{array}$$
	Therefore, there exists only one such initial state $\alpha$:
	$$\alpha(x) = \left\{\begin{array}{ll}1 &\text{ if } x \in \{A,r1,r2\},\\
		0 &\text{ if } x = \rond{A}. \\
	\end{array}\right.$$
	All sign successors $\alpha'$ of $\alpha$ in $\Nsimpl$ should satisfy:
	\begin{enumerate}
		\item $\alpha'(A) \in \crochet{\rond{A}+A}{\alpha,\s}$, and \label{cond1}
		\item $\alpha' \in sol^{\s}(\constr{\Nsimpl}).$ \label{cond2}
	\end{enumerate}
	From condition \ref{cond1}, we have  $\alpha'(A) = 1$. 
	Moreover, if $\alpha' \in \sol[\s](\constr{\Nsimpl})$, then  $\alpha'(r1) = \alpha'(r2) = 1$. 
	Now that we have determined the values for the variables associated with the reactions, we can compute the possible values for the derivative. 
	According to the evolution constraints $\constr{\Nsimpl}$, we have
	$$\alpha'(\rond{A})\in \crochet{r1-r2}{\alpha',\s} = \{-1,0,1\}.$$
	Therefore, $\alpha$ has three different sign successors.
	In Figure ~\ref{fig:signtoocoarse}, we can see the simulation graph induced by the sign successor relation on $\Nsimpl$, projected on $A$ and $\rond{A}$ obtained from the initial state $\alpha$. 
	
	Applying Definition \ref{def: sign-succ} on networks $\NAct$ and $\NInh$ (respectively Figures~\ref{fig:Act} and \ref{fig:Inh}), given initial states $\beta \in \Sset^{\{A, \rond{A},\Act,\rond{\Act},\rAct, r1,r2\}}$ and $\gamma \in \Sset^{\{A, \rond{A},\Inh,\rond{\Inh},\rInh, r1,r2\}}$
	such that all species are present and $A$ is stable, we compute the set of their sign successors.
	There exists only one state $\beta$ of this form such that $\beta \in \sol[\s](\constr{\NAct})$ :
	$$\beta(x) = \left\{\begin{array}{ll}1 &\text{ if } x \in \{A,\Act,\rond{\Act},\rAct,r1,r2\},\\
		0 &\text{ if } x = \rond{A}. \\
	\end{array}\right.$$
	Similarly, there exists a unique state $\gamma$ of this form satisfying $\constr{\NInh}$. This state is such that $\gamma(\rond{A})=0$ and $\gamma(X)=1$ for all other variables $X$. Considering the fact that a sign successor of $\beta$ (resp. $\gamma$) must satisfy conditions \ref{cond1} and \ref{cond2} listed before, we obtain that $\beta$ (resp.  $\gamma$) has three sign successors, 
	i.e. states $\beta'$ (resp. $\gamma'$) such that, for all variables $X\neq\rond{A}$, it holds that $\beta'(X)=\beta(X)$ (resp. $\gamma'(X) = \gamma(X)$).
	Let us point out that, once projected on $A$ and $\rond{A}$, the graphs induced by the sign-successor relation for $\NAct$ and $\NInh$, from $\beta$ and $\gamma$ respectively, turn out to be equal to the graph obtained for $\Nsimpl$ and presented in Figure~\ref{fig:signtoocoarse}.

\end{example}

\begin{figure}
	\centering
	\begin{tikzpicture}[scale=0.7]
		\node[draw, ellipse] (B) at (-1,-1) {0,1};
		\node[ellipse, draw] (C) at (1,-1) {1,-1};
		\node[ellipse, draw] (D) at (1,1) {1,0};	
		\node[ellipse, draw] (E) at (-1,1) {1,1};		
		\draw[->,>=latex] (B) to (E);
		\draw[->,>=latex] (E) edge[loop above] ();
		\draw[->,>=latex] (D) edge[loop above] ();	
		\draw[<->,>=latex] (E) to (D);
		\draw[->,>=latex] (C) to (B);
		\draw[<->,>=latex] (C) to (D);
		\draw[<->,>=latex] (C) to (E);
		\path[->,>=latex] (C) edge[loop right] ();		
	\end{tikzpicture}
	\caption{\label{fig:signtoocoarse}The projection on $A$ and $\rond{A}$ of the graphs induced by the sign-successor relation on $N_{\simpl}$, $\NAct$, and $\NInh$ (Figure~\ref{fig:threeRN}), from any initial state that can be projected on the state $(1,0)$ in the graph. }
\end{figure}
As illustrated by the previous example, the analysis provided by this method is too broad to reason about the dynamical behaviour of a reaction network.
In fact, whenever a species is both produced and consumed by active reactions, the method fails to capture any useful information about its dynamics.
Figure \ref{fig:signgraph} shows the outcome of the method on our running example, the $LV$ model.

\begin{figure}
	\centering
	\includegraphics[scale=0.45]{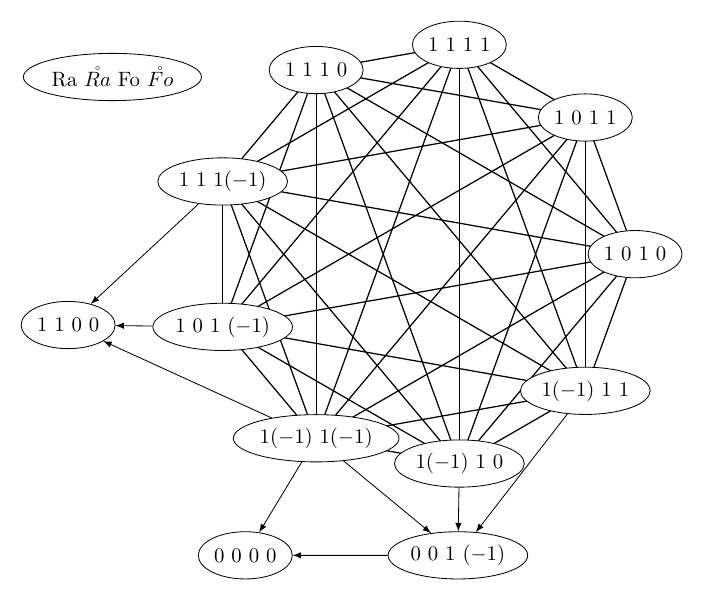}
	\caption{\label{fig:signgraph}
		The graph induced by the sign-successor relation on $LV$. The four values correspond respectively to $Ra$, $\rond{Ra}$, $Fo$, and $\rond{Fo}$ (as shown in the legend at the top left). Undirected edges represent symmetric relations between vertices. Since all states are successors of themselves, self-loops are omitted for clarity.}
\end{figure}

	\section{Abstraction of concrete simulations}

Since Euler-based simulations require complete knowledge of the system, and sign-based simulations are overly coarse, our goal is to significantly improve upon this latter approach. 
To do so, we exploit a more precise abstraction domain, while preserving causality and proving that the resulting model constitutes an abstraction of a continuous semantics. 
In \cite{niehren_abstract_2022}, the causality was managed through ad hoc functions which cannot be generalised to broader settings. 
In this section, we define a new abstraction function, introduce the refined domain, and propose a new causality-aware continuous semantics for the analysis of reaction networks.

\subsection{Abstraction of continuous semantics}
In this subsection, we define the abstraction of functions which are standard solutions of ODEs into transition relation.
Then, we apply it to the standard continuous semantics of reaction networks based on ODEs and show that it does not capture the underlying causality of the system. 

\smallskip

\label{page:abstractfunctions}
We start by defining the abstract transitions for functions over real numbers.

\begin{definition}[Abstract transitions]\label{def:seq}
	Let $V$ be a set of variables, $D$ a finite set, $f :\R_+ \to \R^V$ and $h: (\R_+\to\R_+^V)\to(\R_+\to D^V)$ two functions. 
	We define the \emph{$h$-abstract transition of $f$}, $\ATrans{h}{f}\subseteq (D^V)\times (D^V) $, as follows: 
	
	$$\ATrans{h}{f} = \left\{(\alpha,\beta)\mid 
	\begin{array}{l}
		\exists t_1,t_2 \in \R_+, t_1\leq t_2,\\ \alpha = h(f)(t_1) \wedge \beta = h(f)(t_2)\\ h(f) \text{ constant on } [t_1,t_2[ \text{ or } ]t_1,t_2]\\
		\text{ or } f \text{ non defined on } ]t_1,t_2[
	\end{array}\right\}.$$
\end{definition} 
When $(\alpha,\beta) \in \ATrans{h}{f}$, we say that $\beta$ is one next state of $\alpha$ for $f$ according to the $h$ abstraction.
\begin{example}
	The function $\fexpr[x^2]$ on $\R_+$ can be abstracted as a succession of states in $\Sset$.
	The first state is $0$ for $x=0$ and its only next state is $1$ which is then stable. 
	Indeed, $\homo \circ \fexpr[x^2](0)= 0$ and $\homo\circ \fexpr[x^2](t)= 1$, for any $t>0$.
	In Figure \ref{fig: seq}, we see the graphical representation of the function $\fexpr[x^2]$ 
	and  the graph induced by $\ATrans{h}{\fexpr[x^2]}$ where $h(f) = \homo\circ f$ 
	for all real functions $f$.	
\end{example}
\begin{figure}
	\centering
	\includegraphics[width=5cm]{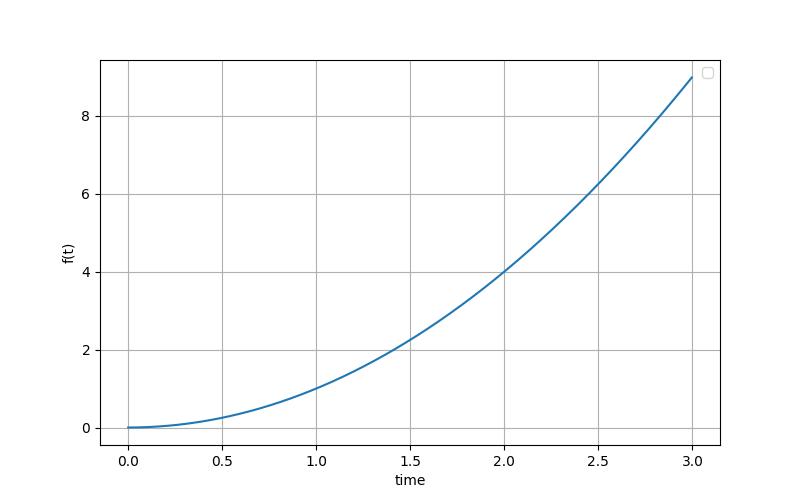}
	\includegraphics[width=5cm]{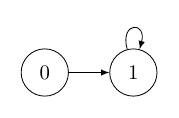}
	\caption{\label{fig: seq}  On the left, the function $x^2$ on $\R_+$, while on the right, the graph representing the transition relation $\ATrans{h}{\fexpr[x^2]}$ where $h(f) = \homo\circ f$ . }
\end{figure}
In order to obtain a more detailed abstraction of the behaviour of the system, we extend to differential equations the approach introduced in Definition~\ref{def: constraints} for evolution constraints, by introducing variables representing the derivatives.
For any equation formula $\phi$, let $S \subseteq \fv(\phi)$ be the set of variables $x$ such that the expression $\dot{x}$ appears in $\phi$. 
We define
$$
\solrond(\phi)=  \sol \left(\bigwedge_{x\in S} \rond{x}\EQUAL\dot{x}~\wedge  \phi\right).
$$
For a reaction network $N$, the subset $S$ defined above is precisely the set of variables representing the species of $N$ (see Definition~\ref{def:totalode}).

Once the derivatives are explicitly expressed, the transition graph abstracted from the real-valued solutions of a given differential equation can be defined as follows.

\begin{definition}[Temporal next relation]
	\label{def: next}
	Let $N=(S,R)$ be a reaction network, $D$ a finite set, and $h: (\R_+\to\R_+^V)\to(\R_+\to D^V)$ a function. 
	We define the \emph{temporal next relation of $N$ induced by $h$} as 
	
	$$
	\rondnextstep[h](N) = \bigcup_{f\in \solrond(\ADE{N})\cap \mathcal{C}^1(\R_+)}
	\ATrans{h}{\restr{f}{\R_+}}.
	$$
\end{definition}
The temporal next relation $\rondnextstep[h](N)$ is the union of the $h$ abstract transitions from all solutions of $\ADE{N}$. 
Throughout this paper, this concept is used exclusively with the abstraction $\homo$. 
For simplicity of notation, we therefore write $\rondnextstep[](N)$ instead of $\rondnextstep[\homo](N)$.

We now illustrate this definition with a simple example.

\begin{example}\label{ex: next}
	Let us consider the reaction network in Figure \ref{fig:exRN}. 
	\begin{figure}
		\centering
		\begin{tikzpicture}[scale=2.5]
			\node[rectangle,draw,label=above:\it{1}] (r1) at (-1.75,0)	{r1};
			\node[circle, draw] (A) at (-1,0) {A};
			\node[circle,draw] (B) at (0.5,0) {B};
			\node[rectangle,draw,label=above:\it{A}] (r2) at (-0.25,0) {r2};
			\draw[->,>=latex] (r1) to (A);
			
			\draw[->,>=latex] (A) to (r2);
			\draw[->,>=latex] (r2) to (B);
		\end{tikzpicture}
		\caption{\label{fig:exRN}Example of a reaction network.}
	\end{figure}
	
	The network is composed of two species $A$, $B$ and two reactions: 
	$r_1$ producing $A$ with kinetic expression $1$, and $r_2$ consuming $A$ and producing $B$ with kinetic expression $A$. 
	This results in the following differential equation:
	\begin{equation}\label{eq:chaincausal}
		\ADE{RN}:= (r_1 \EQUAL 1) \wedge (r_2 \EQUAL A) \wedge (\dot{A} \EQUAL r_1 -r_2) \wedge (\dot{B} \EQUAL r_2).
	\end{equation}
	The unique solution $f$ of $\ADE{RN}$ on $\R_+$ satisfying $f_A(0)=f_B(0)=0$ is plotted on 
	the left-hand side of Figure~\ref{fig:chain}. 
	The sign abstraction of this solution $\ATrans{\homo}{f}$ consists of two transitions:
	$$\left([r_1/1;r_2/0;A/0;\rond{A}/1;B/0;\rond{B}/0], [r_1/1;r_2/1;A/1;\rond{A}/1;B/1;\rond{B}/1]\right) \text{ and}$$
	$$\left( [r_1/1;r_2/1;A/1;\rond{A}/1;B/1;\rond{B}/1],[r_1/1;r_2/1;A/1;\rond{A}/1;B/1;\rond{B}/1]\right).$$
	Without loss of information, we can project the states onto the variables $A,\rond{A},B,\rond{B}$ and represent any state with values ordered accordingly.
	Figure~\ref{fig:chain} shows on the right hand side the graph containing the abstraction of all solutions. 
	The abstraction of the solution $f$ identified before is the path starting at $0 1 0 0$.
	
	We can observe that $A$ and $B$ reach $1$ in the same state rather than sequentially. 
	Indeed for any time step $\epsilon >0$, the concentration of $B$ at time $\epsilon$ is strictly positive.
	Therefore, neither in the plot nor in the graph we observe that the presence of $B$ is caused by the presence of $A$.
	
\end{example}
\begin{figure}
	\centering
	\includegraphics[width=6cm]{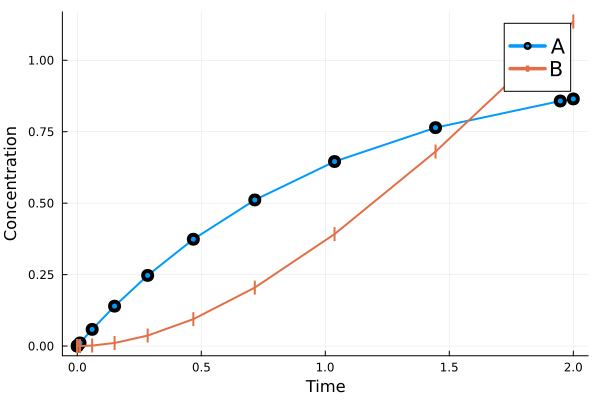}
	\includegraphics[width=4.5cm]{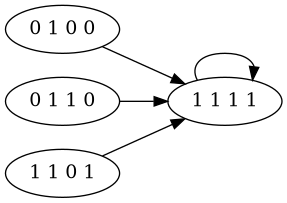}
	\caption{\label{fig:chain}The left-hand side shows the functions $f_A$ and $f_B$, where $f$ is the unique solution of $\ADE{RN}$ 
		(defined in Equation~\ref{eq:chaincausal}) satisfying $f_A(0)=0$ and $f_B(0)=0$. The right-hand side depicts the graph induced by $\rondnextstep(RN)$,
		projected onto $A, \rond{A}, B, \rond{B}$, with vertices ordered accordingly.}
\end{figure}
As illustrated in Example~\ref{ex: next}, abstractions of continuous simulations not always capture the causal structure producing the dynamics.
For this reason, in the next part, we will propose a new continuous semantics providing abstract transitions consistent with causality.

\subsection{Causal continuous semantics}

In this section, we define the criteria that allow us to identify continuous dynamics that are consistent with causality in reaction networks.  
To ensure this property over the whole set of solutions, we impose a notion of delay and we thereby introduce a new continuous semantics for reaction networks.

In order to capture the notion of causality implicit in species dynamics, 
when the variation of a species $A$ causes the variation of a species $B$, we must ensure that these variations appear one after the other in the trace.  
To achieve this result, we introduce the abstraction of the differential, which forecasts the direction of variation in species concentration.  

Let us recall that the differential $df$ is an infinitesimal change of $f(t)$ related to an infinitesimal change of time $dt$. 
\begin{definition}[Abstraction of the differential]\label{def: diff}
	Let $D$ be a finite set and $h: \R_+\to D$ a function. The abstraction of the differential by $h$ is the function $\Absdiff{h}{f}(x)$ such that:
	$$\Absdiff{h}{f}(t) = \lim_{\epsilon \to 0_+}  \Bigl( h \bigl(f(t+\epsilon)- f(t)\bigr)\Bigr). $$
\end{definition}

At this point, we apply Definition \ref{def: diff} using $\homo$ in order to obtain a new abstraction containing both the sign of the species concentration and the sign of the differential. 
\begin{definition}[Differential sign abstraction]\label{def:absimage}
	Let $f$ be a function with domain included in $\R$.
	The \emph{differential sign abstraction} of $f$, denoted $\Absimage{f}$, is defined as
	$$\Absimage{f}(t) = (\homo\circ f (t), \Absdiff{\homo}{f}(t))$$
	where $t\in\R$ and $\slopeset$ represents the set $\Sset\times\Sset$. 
\end{definition}

\begin{example}\label{ex:Absimage}
	To illustrate this definition, let us consider the application of $\Absdiff{\homo}{\cdot}$ to the square function $\fexpr[x^2]$. It holds that 
	$$  \Absdiff{\homo}{\fexpr[x^2]}(t) = \left\{\begin{array}{ccc}
		-1&\text{if}& t<0,\\
		~1&\text{if}& t=0,\\
		~1&\text{if}& t>0.\\
		
	\end{array}\right.$$
	It follows that
	$$  \Absimage{\fexpr[x^2]}(t) = \left\{\begin{array}{ccc}
		(1,-1)&\text{if}& t<0,\\
		(0,~1)&\text{if}& t=0,\\
		(1,~1)&\text{if}& t>0.\\
		
	\end{array}\right.$$

\end{example}

We can observe in Example \ref{ex:Absimage} that even though the sign of the derivative of $\fexpr[x^2]$ at $t=0$ is $0$,
its sign differential abstraction is $1$. This follows from the fact that, for any $\epsilon>0$, 
$\fexpr[x^2](\epsilon)>\fexpr[x^2](0)$. 
This shows how the sign of the derivative is not always predicting the future change in the behaviour of the function. However, the derivative is still relevant, because it highlights the causes (the influences of the other species) of such a change. 
Indeed, the derivative captures the cause of a change, while the differential captures the actual change of a species concentration. 
The inconsistency with causality can be related to the fact that the sign of the derivative does not necessarily coincide with the sign of the differential. 
We therefore introduce new tools to identify functions consistent with causality.

\ignore{With the $\ADE{\cdot}$ interpretation of reaction networks, as in Example \ref{ex: next}, interaction between species with each other determine their derivatives. However, the sign of this derivative does not always give the direction of variation. Indeed, in Example~\ref{ex: next}, since the first species $A$ is initially $0$, the derivative of $B$ is also $0$ at the initial time. However, the value of $B$ becomes positive immediately afterward. Therefore, even though its derivative is initially $0$, $B$ increases from $0$. 
	The difficulty is that the abstraction of the derivative given by the differential equation,
	does not necessarily coincide with the observed direction of evolution.  It is necessary to clearly define this latter concept, formally presented in Definition~\ref{def: diff}.

	The non-causal behaviour of $\rondnextstep$ comes from the mismatch between the dynamics inferred from contextual influence
	(i.e. those that determine the derivative) and the observed direction of evolution described by the differential abstraction. }
\begin{definition}[Causal match]\label{def:match}
	Let $V$ be a finite set of variables, $x \in V$ a variable, and $f:\R_+ \to \R^V$ a function.
	The function $f$ is a \emph{causal match for $x$ at $t \in \R_+$} if and only if the sign of differential matches the sign of the right-hand derivative, i.e.
	\begin{equation}\label{eq: causal}
		\Absdiff{h_\s}{f_x}(t) = \homo \circ \Dplus f(t).	
	\end{equation}
	Otherwise, we say that $f$ is a \emph{causal mismatch for $x$ at $t$}.
	For a subset $U \subseteq V$, if $f$ is a causal match for all $x \in U$ and for all $t \in \R_+$, 
	we say that $f$ is \emph{causal for $U$}.
	
\end{definition}
Causal mismatches may occur only when the derivative is null as stated by the following lemma. 
\begin{lemma}\label{lemma: mismatch}
	Let $V$ be a finite set of variables, $x \in V$ a variable, and $f:\R_+ \to \R^V$ a function.
	If the function $f$ is a causal mismatch for $x$ at $t$ then 
	$\Dplus f (t) = 0$.
\end{lemma}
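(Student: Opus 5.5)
We argue by contraposition: we show that if $\Dplus f_x(t) \neq 0$, then $f$ is a causal match for $x$ at $t$. Throughout we read $\Dplus f(t)$ in Equation~\eqref{eq: causal} as the right-hand derivative $\Dplus f_x(t)$ of the component $f_x$. We implicitly assume that this one-sided limit exists, since otherwise $\homo\circ\Dplus f(t)$ is not defined and the statement has no content. Write $d = \Dplus f_x(t) = \lim_{\epsilon\to 0_+} (f_x(t+\epsilon)-f_x(t))/\epsilon$.

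\textbf{Case $d>0$.} Apply the definition of the limit with tolerance $d/2$. This gives $\epsilon_0>0$ such that for all $0<\epsilon<\epsilon_0$,
\[
\frac{f_x(t+\epsilon)-f_x(t)}{\epsilon} > \frac{d}{2} > 0 .
\]
Multiplying by $\epsilon>0$ yields $f_x(t+\epsilon)-f_x(t)>0$, so $\homo\bigl(f_x(t+\epsilon)-f_x(t)\bigr)=1$ for every $\epsilon\in\,]0,\epsilon_0[$. The function $\epsilon\mapsto \homo(f_x(t+\epsilon)-f_x(t))$ is therefore eventually constant equal to $1$ as $\epsilon\to 0_+$. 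Hence the limit in Definition~\ref{def: diff} exists and $\Absdiff{\homo}{f_x}(t)=1=\homo(d)$.

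\textbf{Case $d<0$.} The argument is symmetric and gives $\Absdiff{\homo}{f_x}(t)=-1=\homo(d)$.

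In both cases Equation~\eqref{eq: causal} holds, so $f$ is a causal match for $x$ at $t$. By contraposition, a mismatch forces $\Dplus f_x(t)=0$.

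The only delicate point is the meaning of the limit of $\homo$-values in the finite set $\Sset$. It must be understood as eventual constancy on a right neighbourhood of $0$, that is, with the discrete topology on $\Sset$. Once this reading is fixed, the key step is simply that a nonzero right derivative fixes the sign of the increment on a whole right neighbourhood. No regularity of $f$ beyond the existence of $\Dplus f_x(t)$ is needed.
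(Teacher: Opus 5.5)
Your proof is correct and follows the same idea as the paper's sketch. A nonzero right derivative $\Dplus f_x(t)$ fixes the sign of the increment $f_x(t+\epsilon)-f_x(t)$ on a whole right neighbourhood of $t$, so the differential sign equals the sign of the derivative, and a mismatch can only occur when $\Dplus f_x(t)=0$. You make rigorous what the paper leaves informal, including reading the limit in $\Sset$ as eventual constancy and the implicit assumption that $\Dplus f_x(t)$ exists. That assumption is genuinely needed rather than harmless: without it, the "Otherwise" clause of the definition would class $f$ as a mismatch and the lemma would fail.
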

\begin{proofsketch}
	Whenever the derivative is positive, the function increases locally,
	therefore it is a causal match at this time. 
	Similarly, a negative derivative implies a local decrease and thus a causal match. 
	However, when the derivative is zero, the function may increase or decrease immediately afterward.
	Hence, causal mismatch can arise only in the case where the derivative is zero.\qed
\end{proofsketch}
In order to obtain solutions which are causal for all species according to Definition \ref{def:match}, we will slightly relax the constraints imposed by $\ADE{N}$ 
by allowing species to temporarily delay their increase or decrease when their derivative is zero.
To achieve this, we consider differential equations that hold a subset of species fixed 
and allow the others to evolve according to the dynamics dictated by the reaction network. 
\begin{definition}[$\mathbf{ADE(\cdot)}$ with Fixed Variables]\label{def:fix}
	Let $N=(S,R)$ be a reaction network such that
	$$
	\ADE{N} = \bigwedge_{r \in R} r \EQUAL e_r \bigwedge_{x \in S} \dot{x} \EQUAL e_x.
	$$ 
	For any subset $S_1 \subseteq S$, we define a new differential equation: 
	$$
	\fix(N) =  \bigwedge_{r \in \labels (R)} r \EQUAL e_r \bigwedge_{x \in S/S_1} (\dot{x} \EQUAL e_x \wedge \rond{x}\EQUAL\dot{x})
	\bigwedge_{x \in S_1} (\dot{x} \EQUAL 0 \wedge \rond{x} \EQUAL e_x).\
	$$ 		
\end{definition}
The existence result for the above differential equation is provided in Appendix; see Proposition~\ref{prop: exists FIX} for a detailed statement and proof.

In order to ensure that the sign of the differential is zero and thereby prevent any mismatch, we formalise the notion of delay.
Intuitively, a delay corresponds to the entire time interval during which a species in $S_1$ is held constant using the differential equation introduced in Definition \ref{def:fix}, immediately following a potential causal mismatch. 
The equation $\fix(N)$ will be the one defining the behaviour of the system within the delay, and $\rond{x} \EQUAL e_x$ will be necessary to recover the original behaviour once the delay is over.
\label{page:delay}
\begin{definition}[Delay]\label{def: delays}
	Let $N=(S,R)$ be a reaction network, and $f: \R_+\to \R^{\labels (R)\cup S\cup \rond{S}}$ a function.
	A \emph{delay of $f$} (for some non-empty set $S_1\subseteq S$) is a maximal open connected set $\mathcal{O}\subset \R_+$ such that,
	for all  $\mathcal{O}_l$ open non-empty subset of $\mathcal{O}$, we have:
	\begin{itemize}
		\item $ \restr{f}{\mathcal{O}_l}\in \FIX{S_1}{N};$
		\item $ \restr{f}{\mathcal{O}_l}\notin \solrond(\ADE{N}).$
	\end{itemize}
\end{definition}

The delay $\mathcal{O}$ is maximal by inclusion, i.e. there is no open connected set strictly containing $\mathcal{O}$ satisfying the above constraints.
We denote by $\mathcal{O}_{f}$ the union of all delays of $f$.

\smallskip

By exploiting the notions of delay and mismatch, we can introduce the set of \emph{causal continuous predictions}, which represents the new causality-aware continuous semantics for the analysis of reaction networks that we anticipated before. 
Unlike solutions of $\ADE{N}$, these functions are causal for species with respect to $\Absimage{\cdot}$ thanks to the introduction of delays, which will be short enough to ensure that the functions remain unchanged according to $\Absimage{\cdot}$.

\begin{definition}[Causal continuous predictions]\label{def: ccpred}
	Given a reaction network $N=(S,R)$, the set of \emph{causal continuous predictions} for $N$, denoted by $\causalsolution(N)$,
	is the set of functions
	$f:\R_+\to \R^{S\cup\rond{S}\cup \labels (R)}$ such that:
	\begin{enumerate}
		\item\label{constr: continuity} \emph{[continuity]} 
		$f$ is continuous on a connected domain;
		
		\item\label{constr: small enough} \emph{[admissible delays]} for all delays $]t_1,t_2[$ of $f$ for $S_1\subset S$, we have:
		\begin{itemize}
			\item $\forall x\in S\cup\rond{S}\cup \labels (R)$, the abstract function 
			$ \Absimage{f_x}\text{ is constant on }[t_1,t_2[\text{ or }]t_1,t_2],$
			\item $\forall x \in S_1$, $f_{\rond{x}} (t_1) = 0$;
		\end{itemize}

		\item\label{constr: real ADE} \emph{[consistency]} for all $t \in \DomF{f}\setminus\mathcal{O}_{f}$,
		there exists an open set $\mathcal{O} \subset \DomF{f}$ containing $t$ such that  
		$$ \exists f' \in \solrond(\ADE{N}), \restr{f'}{\mathcal{O} \setminus \mathcal{O}_{f}} = \restr{f}{\mathcal{O} \setminus \mathcal{O}_{f}}; $$
		\item\label{constr: causal} \emph{[causality]} $f$ is causal (with respect to $\Absimage{\cdot}$) for the variables of the species.
	\end{enumerate} 
\end{definition}
The existence of causal continuous prediction is demonstrated in the Appendix, see Proposition~\ref{prop: exists CSOL}.
A causal continuous function evolves according to the dynamics induced by the reaction network, 
except for short delay intervals of $ \mathcal{O}_{f}$ introduced to ensure causality.
Outside these delays, the evolution is dictated by Constraint~\ref{constr: real ADE}.
Since $\R_+\setminus \mathcal{O}_{f}$ is not an open set, 
Constraint~\ref{constr: real ADE} cannot be expressed simply as the requirement that $f$ solves the differential equation on $\R_+\setminus \mathcal{O}_{f}$. 

\smallskip

We illustrate Definition \ref{def: ccpred} on a simple reaction network. 
\begin{figure}
	\centering
	\includegraphics[width=5cm]{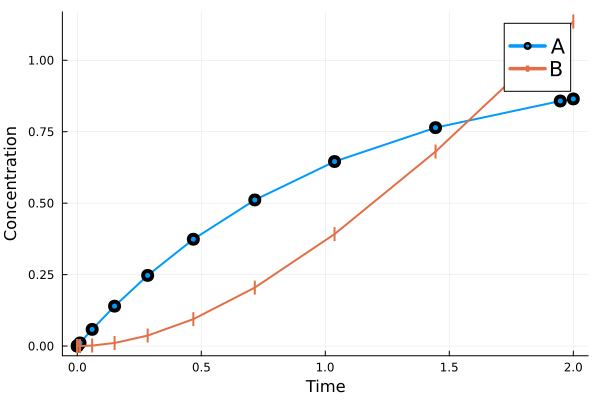}
	\includegraphics[width=5cm]{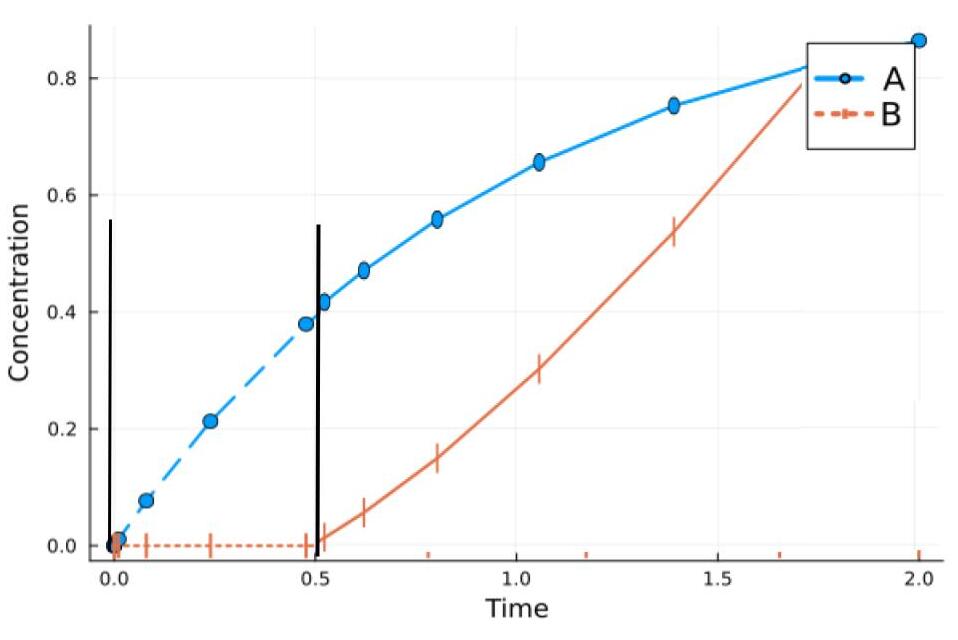}
	\caption{\label{fig:causal} On the left, the graphical representation of the unique solution of $\ADE{RN}$ defined in Equation~\ref{eq:chaincausal} of Example~\ref{ex: next},
		with initial conditions $A(0)=0$ and $B(0)=0$. On the right, the causal continuous prediction of Example~\ref{ex: causal} for $RN$. The dashed part represents the delay for $B$. }
	
\end{figure}

\begin{example}
	\label{ex: causal}
	
	We compute a causal continuous prediction for the reaction network of Example~\ref{ex: next} and the initial state such that $A(0)=0$ and $B(0)=0$. 
	Initially, the derivative of $B$ is equal to $0$. 
	However, as shown on the left of Figure~\ref{fig:causal}, 
	the solution of $\ADE{RN}$ with these initial conditions 
	satisfies $f_B(t)>0$ for all sufficiently small $t>0$. 
	Hence, the unique solution $f^0\in \sol(\ADE{RN})$ such that $f^0_A(0)=f^0_B(0)=0$ 
	is a causal mismatch for $B$ at time $0$. 
	Therefore, a delay must start at $0$ for any $f\in \causalsolution(RN)$ satisfying $f_A(0)=f_B(0)=0$. 
	Over this delay, $f$ must be a solution of $\fix[\{B\}](RN)$.
	Let $f^1\in \fix[\{B\}](RN)$ be a function such that $f^1(0) = f^0(0)$.
	The function $f^1$ is uniquely determined as follows:
	$$ f^1_X = \left\{\begin{array}{ll}
		1 &\text{ if } X=r1\\
		0 &\text{ if } X=B\\
		exp(-t)&\text{ if } X=\rond{A}\\
		1-exp(-t) &\text{ if } X \in \{A,r_2,\rond{B}\}.\\
	\end{array}\right.
	$$	
	We compute now $\Absimage{f^1}$.
	First, for $t = 0$, we have:
	$$ \Absimage{f^1_X}(0)= \left\{\begin{array}{ll}
		(1,0) &\text{ if } X\in \{r_1,\rond{A}\}\\
		(0,0) &\text{ if } X=B\\
		(0,1) &\text{ if } X \in \{A,r_2,\rond{B}\}.\\
	\end{array}\right.
	$$ 
	Then, for any $t >0$, it holds that:
	$$ \Absimage{f^1_X}(t)= \left\{\begin{array}{ll}
		(1,0) &\text{ if } X\in \{r_1\}\\
		(1,-1)&\text{ if } X\in \{\rond{A}\}\\
		(0,0) &\text{ if } X=B\\
		(1,1) &\text{ if } X \in \{A,r_2,\rond{B}\}.\\
	\end{array}\right.
	$$ 
	We observe that, for any $t_1> 0$, $\Absimage{f^1}$ is constant on $]0,t_1]$. 
	Thus, for all $t_1> 0$, there is a function $f\in \causalsolution(RN)$ satisfying $f_A(0)=f_B(0)=0$ 
	with a delay $]0,t_1[$. We arbitrarily choose $t_1 = 0.5$. 
	Since $f$ is continuous, we have $f(0.5)=f^1(0.5)$. 
	Moreover, as $0.5$ does not belong to any delay of $f$, 
	there exists $f^2 \in \solrond(\ADE{RN})$ such that $f$ and $f^2$ are equal over $[0.5,t_2]$ 
	for some $t_2 \geq 0.5$. If a mismatch occurs after $t_1$ for $f^2$, the time at which it occurs is the maximal allowed value for $t_2$. 
	We denote the real $u = exp(-0.5) \approx 0.6$.
	The function $f^2$ is uniquely determined and satisfies:
	$$ f^2_X(t) = \left\{\begin{array}{ll}
		1 &\text{ if } X=r_1\\
		t + exp(-t) -u - 0.5 &\text{ if } X=B\\
		exp(-t)&\text{ if } X=\rond{A}\\
		1-exp(-t)&\text{ if } X \in \{A,r_2,\rond{B}\}.\\
	\end{array}\right.
	$$
	In order to show that $f^2$ is causal for all the species over $[0.5,+\infty[$, 
	we compute $\Absimage{f^2}$ over this interval.
	First, for $t = 0.5$, we have:
	$$ \Absimage{f^2_X}(0.5)= \left\{\begin{array}{ll}
		(1,0) &\text{ if } X =r_1\\
		(1,-1)&\text{ if } X=\rond{A}\\
		(0,1) &\text{ if } X=B\\
		(1,1) &\text{ if } X \in \{A,r_2,\rond{B}\}.\\
	\end{array}\right.
	$$ 
	Then, for any $t > 0.5$, we have:
	$$ \Absimage{f^2_X}(t)= \left\{\begin{array}{ll}
		(1,0) &\text{ if }  X =r_1\\
		(1,-1)&\text{ if } X=\rond{A}\\
		(1,1) &\text{ if } X \in \{A,B,r_2,\rond{B}\}.\\
	\end{array}\right.
	$$ 
	Now we can verify that $f^2$ is causal for $A$ and $B$. According to Equation~\ref{eq: causal}, theirs sign of differential and derivatives should be equal. 
	
	The sign of the differential of $X\in \{A,B\}$ is captured by the second component of $\Absimage{f^2_X}(t)$ and the sign of the derivative is
	the first component of $\Absimage{f^2_{\rond{X}}}(t)$. All these value are equal to $1$,
	therefore $f^2$ is causal for both species. 
	No delay is needed after $t=0.5$, therefore $f$ is equal to $f^2$ over $[0.5,+\infty[$.
	In summary, the function $f$ defined as follows, 
	$$
	f(t)= \left\{\begin{array}{ll} 
		f^0(t)  &\text{ if } t = 0 \\
		f^1(t)  &\text{ if } 0< t < 0.5\\
		f^2(t)  &\text{ if } t\geq 0.5 \\\end{array}\right.
	$$
	is in $\causalsolution(RN)$ and satisfies $f_A(0)=f_B(0)=0$.
	This function is plotted on the right part of the Figure~\ref{fig:causal}.
\end{example}
\subsection{Abstraction of causal continuous predictions}
We can now formally introduce the notion of causal next, 
which is constructed similarly to the temporal next, 
but based on causal predictions.

Delays are used to temporally fix some species in order to reintroduce causality. Therefore if $]t_1,t_2[$ is a delay, the natural immediate successor of the state at $t_1$ is the state at $t_2$. Hence, as stated in the following definition, abstract transitions on causal predictions are selected outside delays.
\begin{definition}[Causal Next]\label{def: cnext}
	Let $N=(S,R)$ be a reaction network. 
	The \emph{causal next relation of $N$ induced by $\homo$} is defined as 
	$$
	\rondcnextstep[\Sset](N)~= \bigcup_{f\in \causalsolution(N)} \ATrans{\homo}{\restr{f}{\R_+\setminus\cSeg}}.
	$$
	Similarly, the \emph{causal next relation over $N$ induced by $\Absfun$} is defined as 	
	$$
	\rondcnextstep[\slopeset](N) = \bigcup_{f\in \causalsolution(N)}
	\ATrans{\Absfun}{\restr{f}{\R_+\setminus\cSeg}}.
	$$	
\end{definition}
We point out that for a given reaction network $N$, $\rondcnextstep[\Sset](N)$ is a relation over states in $\Sset^{\labels (R)\cup\rond{S}\cup S}$.
On the other hand, $\rondcnextstep[\slopeset](N)$ is a relation over states $\slopeset^{\labels (R)\cup\rond{S}\cup S}$, with
$\slopeset = \Sset\times \Sset$. 

In figures, for the sake of readability, we project the states onto the signs of species and derivatives, that is  $\Sset^{\rond{S}\cup S}$. 
\begin{figure}
	\centering
	\includegraphics[width=5cm]{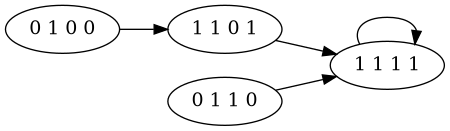}
	\caption{\label{fig:cnext}
		The graph induced by $\rondcnextstep[\slopeset](RN)$
		and projected on $\Sset^{\{A,\rond{A},B,\rond{B}\}}$ in this specific order. }
\end{figure}

\begin{example}
	We now compute the transitions obtained from the causal continuous prediction $f$ computed in the Example~\ref{ex: causal}.  By Definition~\ref{def: cnext}, these transitions belong to $\rondcnextstep[\slopeset](N)$.
	The first state is the abstraction of the initial condition: 
	$$\Absimage{f}(0)(X)= \left\{\begin{array}{ll}
		(1,0) &\text{ if } X=r1\\
		(0,0) &\text{ if } X=B\\
		(1,-1)&\text{ if } X=\rond{A}\\
		(0,1) &\text{ if } X \in \{A,r_2,\rond{B}\}.\\
	\end{array}\right.
	$$ 
	By projecting onto the signs of $A, \rond{A}, B, \rond{B}$, we obtain $(0,1,0,0)$. 
	Then, as $]0,0.5[$ is a delay, the immediate next step occurs at $t = 0.5$,
	and we have:
	$$\Absimage{f}(0.5)(X)= \left\{\begin{array}{ll}
		(1,0) &\text{ if } X=r1\\
		(0,1) &\text{ if } X=B\\
		(1,-1)&\text{ if } X=\rond{A}\\
		(1,1) &\text{ if } X \in \{A,r_2,\rond{B}\}.\\
	\end{array}\right.
	$$ 
	By projecting onto the signs of $A, \rond{A}, B, \rond{B}$ we obtain $(1,1,0,1)$. 
	Then the time step can be of any length, as long as no state is missed from the perspective of $\Absimage{\cdot}$.
	In this case, for any $t>0.5$, the abstraction will always be: 
	$$\Absimage{f}(t)(X)= \left\{\begin{array}{ll}
		(1,0) &\text{ if } X=r1\\
		(1,1) &\text{ if } X=B\\
		(1,-1)&\text{ if } X=\rond{A}\\
		(1,1) &\text{ if } X \in \{A,r_2,\rond{B}\}.\\
	\end{array}\right.
	$$ 
	By projecting again, we obtain $(1,1,1,1)$.
	Therefore, the trace abstracted from the function $f$, computed in Example~\ref{ex: causal} starts in 
	$(0,1,0,0)$, goes to $(1,1,0,1)$, and ends in $(1,1,1,1)$ (where it stays forever). Figure~\ref{fig:cnext} depicts the graph induced by $\rondcnextstep[\slopeset](RN)$ 
	that includes this path.
\end{example}
	\section{Abstract simulation based on the sign of the differential}\label{sec4}
 
 In this section, we introduce a new successor based on the differential sign abstraction and establish its soundness with respect to causal continuous predictions. 
 With this aim, we first define a new structure based on $\slopeset$ and show that the abstraction operator $\Absimage{\cdot}$ behaves similarly to a homomorphism. 
 Building on this perspective, we reinterpret continuity of real functions through the lens of this abstraction and derive the successor based on the differential sign abstraction. 
 Finally, we illustrate its efficiency compared to the sign successor (Definition \ref{def: sign-succ}) and prove that this successor captures all causal next transitions (Definition \ref{def: cnext}).

 \subsection{Differential sign structure}	
 We have shown that the sign structure is not sufficiently informative, 
 as it only expresses whether reactions are active. 
 Consequently, whenever a species is both produced and consumed by at least one active reaction, 
 the sign abstraction cannot predict whether the species is increasing, stable, or decreasing.
 However, for example, if the consumption of a species decreases while its production rises, 
 the derivative of the species must increase. 
 Similarly, if the reactions influencing a species remain stable, 
 its derivative remains stable.
 
 \label{page:diffstructure}
 To enrich the sign abstraction with these observations, 
 we combine it with the differential sign abstraction (Definition~\ref{def: diff}).
 For better visualization, we represent the set of signs associated with the differential as arrows: 
 $-1$ is denoted by $\downarrow$, $0$ by $\rightarrow$, and $1$ by $\uparrow$. 
 Therefore, from now on, $\slopeset$ will be equivalently seen as $\{-1,0,1\}\times\{\downarrow,\rightarrow,\uparrow\}$.
 \begin{definition}[Differential sign structure]\label{def: slope}
 	The \emph{differential sign structure} $\slope$ is the $\Symbs$-structure such that:  
 	\begin{itemize}
 		\item $\dom{\slope} = \slopeset$;
 		\item for any 
 		$k \in \Consts$, $k^{\slope} = (k^{\s},\rightarrow)$;
 		
 		\item for any operator $\odot$ in $\Symbs^{(n)}$, 
 		$$\odot^{\slope} = \{(\Absimage{a_1}(t),\dots, \Absimage{a_{n+1}}(t)) | t \in \R, (a_1 ,\dots, a_{n+1} ) \in \odot^{\Fun}\}.$$
 	\end{itemize}
 \end{definition}
 \begin{example}
 	\label{ex:x-1}
 	
 	The function $\fexpr[x-1]$ is positive for $x<1$ and increasing in the whole domain of definition. This can be represented in $\slope$
 	by the pairs: $(-1,\uparrow)$ when $x<1$, $(0,\uparrow)$ for $x=1$, and $(1,\uparrow)$ when $x>1$.
 	
 	Let us now consider the function $\fexpr[(x-1)^2]$. For any $x<1$, it is positive and decreasing. 
 	Indeed, 
 	$$
 	(-1,\uparrow) \mult^{\slope}_{set}(-1,\uparrow) = \{(1,\downarrow)\}.
 	$$
 	When $x=1$, the function $\fexpr[x-1]$ is zero but increasing.
 	For $\fexpr[(x-1)^2]$, we have
 	$$
 	(0,\uparrow) \mult^{\slope}_{set}(0,\uparrow) = \{ (0,\uparrow)\}.
 	$$
 	To conclude, for any $x>1$, $\fexpr[x-1]$ is positive and increasing, as well as $\fexpr[(x-1)^2]$.
 	Indeed, 
 	$$
 	(1,\uparrow) \mult^{\slope}_{set}(1,\uparrow) = \{ (1,\uparrow)\}.
 	$$
 \end{example}
 
 The full tables for arithmetic operators of $\slope$ are given in Appendix \ref{ap:delta9}.
 
 As $\slope$ is built from $\Fun$ through $\Absimage{\cdot}$, we naturally obtain the following lemma.
 \begin{lemma}\label{lemma:dt9homorphism}
 	Let $V$ be a finite set of variables and $\phi \in \formulas[\Symbs]{V}$ be a $\Symbs$-formula. 
 	For all $f \in \sol(\phi)$ and for all $t\in \DomF{f}$, 
 	the assignment $\alpha$ defined by $$\forall x \in V,~\alpha(x) = \Absimage{f_x}(t)$$ is a solution to $\phi$ in $\slope$, i.e. $\alpha \in \sol[\slope](\phi)$.
 \end{lemma}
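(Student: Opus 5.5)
The proof proceeds by structural induction, first on expressions and then on formulas. The key intermediate claim concerns expressions. For every $e\in\Expr[\Symbs](V)$, every assignment $f:V\to\dom{\Fun}$, every $g\in\crochet{e}{f,\Fun}$ and every $t\in\DomF{g}$, we want
$$\Absimage{g}(t)\in\crochet{e}{\alpha_t,\slope},$$
where $\alpha_t(x)=\Absimage{f_x}(t)$. Once this holds, formulas follow easily. If $f\in\sol(\phi)$, then for each atom $e_1\EQUAL e_2$ of $\phi$ there is some $g\in\crochet{e_1}{f,\Fun}\cap\crochet{e_2}{f,\Fun}$. By the claim, $\Absimage{g}(t)$ lies in both $\crochet{e_1}{\alpha_t,\slope}$ and $\crochet{e_2}{\alpha_t,\slope}$, so their intersection is non-empty. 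Conjunctions are handled componentwise, which gives $\alpha_t\in\sol[\slope](\phi)$.

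The expression claim has three cases.

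For a variable $x$, we have $g=f_x$, and the result is immediate from the definition of $\alpha_t$.

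For a constant $k$, we have $g=k^{\Fun}$, which is constant. Hence $\homo(g(t))=k^{\s}$, and $\Absdiff{\homo}{g}(t)=\lim\homo(0)=0$, i.e.\ $\rightarrow$. This gives $\Absimage{g}(t)=(k^{\s},\rightarrow)=k^{\slope}$.

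For an operator application $\odot(e_1,\dots,e_n)$, by the definition of $\crochet{\cdot}{}$ there are $g_i\in\crochet{e_i}{f,\Fun}$ with $(g_1,\dots,g_n,g)\in\odot^{\Fun}$. Definition~\ref{def:fun} forces $\DomF{g}=\bigcap\DomF{g_i}$, so $t$ lies in every $\DomF{g_i}$ and the induction hypothesis applies to each $g_i$. Definition~\ref{def: slope} then says directly that $(\Absimage{g_1}(t),\dots,\Absimage{g_n}(t),\Absimage{g}(t))\in\odot^{\slope}$. So $\Absimage{g}(t)$ belongs to $\odot^{\slope}_{\set}$ applied to elements of the $\crochet{e_i}{\alpha_t,\slope}$, as required.

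The main obstacle is checking that the relevant $\Absimage{\cdot}$ values are well defined, that is, that the one-sided limit defining $\Absdiff{\homo}{g}(t)$ exists, and that the time $t$ used in Definition~\ref{def: slope} is legitimate. Two points need attention.

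First, Definition~\ref{def: slope} quantifies over $t\in\R$ with $\Absimage{a_i}(t)$ implicitly defined. I would read it as ranging over those $t$ at which all the $a_i$ are defined, which is exactly our situation.

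Second, existence of the limit. For the functions actually concerned, $g$ is built from $\mathcal{C}^1$ data by operators that are $\mathcal{C}^1$ on their domains (Definition~\ref{def: actualsignature}). I would handle the limit the same way Lemma~\ref{lemma: mismatch} handles signs. If the right derivative is nonzero, the limit is its sign. Otherwise, I would adopt the paper's implicit convention that $\Absimage{g}(t)$ is defined whenever it is used. If the limit does not exist, the lemma is understood only for $f$ and $t$ at which all the abstractions appearing are defined, since $\alpha$ must make sense. No further approximation argument is needed.

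Since $\slope$ is defined as the image of $\Fun$ under $\Absimage{\cdot}$, pointwise in $t$, the remaining steps are bookkeeping.
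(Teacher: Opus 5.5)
Your proposal is correct and follows the same route as the paper. It uses a structural induction on expressions showing $\Absimage{g}(t)\in\crochet{e}{\alpha,\slope}$ for every $g\in\crochet{e}{f,\Fun}$, with constants, variables, and operator applications read off from Definition~\ref{def: slope}, followed by an induction on formulas for atoms $e_1\EQUAL e_2$ and conjunctions. Your extra remarks make explicit two assumptions the paper's proof leaves implicit: that $\DomF{g}=\bigcap_i\DomF{g_i}$, so the induction hypothesis applies at $t$, and that the one-sided limits must exist for every intermediate subterm, not only for the $f_x$ (this holds, for instance, when all subterm interpretations are analytic, by Property~\ref{prop:existIsd}).
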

 \begin{proof}
 	Let us consider a finite set of variables $V$, a function $f :V \to \dom{\Fun}$, and a time $t$ such that $t\in \DomF{f_x}$ for all $x\in V$.
 	We denote by $\alpha: V \to \slopeset$ the assignment such that:
 	$$\forall x \in V, \alpha(x) = \Absimage{f_x}(t).$$
 	We prove the statement by induction first on expressions and then on formulas. 
 	
 	For a given expression $e \in \Expr[\Symbs](V)$, let \(P(e)\) denote the property:
 	$$\{\Absimage{g}(t) \mid g\in \crochet{e}{f,\Fun}\} \subseteq \crochet{e}{\alpha,\slope}.$$
 	We prove that \(P(e)\) always holds by structural induction on the shape of $e$.
 	\begin{itemize}
 		\item If $e = k$ for $k \in \Consts$, we have $\crochet{k}{f,\Fun} = \{k^{\Fun}\}$,
 		which is a constant function. Therefore the sign of its differential is zero, i.e
 		$\Absimage{k^{\Fun}}(t) = (\homo(k),\rightarrow)$ and $\homo(k) = k^{\s}$.
 		According to the definition of $\slope$ (Definition~\ref{def: slope}), we have that 
 		$k^{\slope}= (k^{\s},\rightarrow)$.
 		Hence, the property holds.
 		
 		\item If $e = x $ with $x\in V$, according to the interpretation of expressions given in Section~\ref{para: expr}, it holds that 
 		$\crochet{x}{f,\Fun} = \{f_x\}$ and  $\crochet{x}{\alpha,\slope} = \{\alpha(x)\}$. Therefore, the property holds. 
 		
 		\item If $e = \odot(e_1)$ with $\odot \in \Symbs^{(1)}$ and $e_1 \in \Expr[\Symbs](V)$, according to the interpretation of expressions, 
 		given a $g \in \crochet{\odot(e_1)}{f,\Fun}$, it always exists a $g_1\in \crochet{e_1}{f,\Fun}$ such that $(g_1,g) \in \odot^{\Fun}$. 
 		By induction hypothesis, we have that $\Absimage{g_1}(t) \in \crochet{e_1}{\alpha,\slope}$. Moreover by definition of $\slope$, it follows that 
 		$(\Absimage{g_1}(t),\Absimage{g}(t)) \in \odot^{\slope}$. Therefore, we have that 
 		$\Absimage{g}(t) \in \crochet{e}{\alpha,\slope}$ and the property holds.
 		
 		\item If $e = e_1 \odot e_2$ with $\odot \in \Symbs^{(2)}$ and $e_1,e_2 \in \Expr[\Symbs](V)$, according to the interpretation of expressions, 
 		given a $g \in \crochet{e_1 \odot e_2}{f,\Fun}$, it exists a $g_1\in \crochet{e_1}{f,\Fun}$ and a $g_2\in \crochet{e_2}{f,\Fun}$ such that $(g_1,g_2,g) \in \odot^{\Fun}$.
 		By induction hypothesis, we have that $\Absimage{g_i}(t) \in \crochet{e_i}{\alpha,\slope}$, for $i \in\{1,2\}$. Moreover by definition of $\slope$,
 		it holds that $(\Absimage{g_1}(t),\Absimage{g_2}(t),\Absimage{g}(t)) \in \odot^{\slope}$. Therefore, it follows that 
 		$\Absimage{g}(t) \in \crochet{e}{\alpha,\slope}$ and the property holds.		 		 
 	\end{itemize}
 	Using the fact that \(P(e)\) holds for all expressions \(e\), we now prove by structural induction on the shape of formulas $\phi \in \formulas[\Symbs]{V}$ 
 	that the property holds:
 	$$f \in \sol(\phi) \Rightarrow \alpha\in \sol[\slope](\phi).$$
 	\begin{itemize}
 		\item If $\phi = (e_1\EQUAL e_2)$ with $e_1,e_2 \in \Expr[\Symbs](V)$, a function $f$ is a solution of $\phi$ whenever it exists a $ g \in \crochet{e_1}{f,\Fun}\cap  \crochet{e_2}{f,\Fun}$. By $P(e_1)$ and $P(e_2)$, we have that 
 		$\Absimage{g}(t) \in \crochet{e_1}{\alpha,\slope} \cap \crochet{e_2}{\alpha,\slope}$. 
 		Hence, the property holds.
 		\item If $\phi = \phi_1 \wedge \phi_2$ with $\phi_1,\phi_2 \in  \formulas[\Symbs]{V}$, a function $f$ is a solution of $\phi$ whenever  $f \in \sol(\phi_1)$ and $f \in \sol(\phi_2)$. By induction hypothesis, we have that  $\alpha\in \sol[\slope](\phi_1)$
 		and  $\alpha\in \sol[\slope](\phi_2)$.
 		Therefore, the property holds.	
 	\end{itemize}
 \end{proof}

 \subsection{Simulation graph for the differential sign abstraction}
 In Definitions~\ref{def: euler-succ} and \ref{def: sign-succ}, the notion of successor was introduced by using an approximation of the derivative. 
 However, as shown before, the derivative on its own fails to capture the changes of interest in the dynamics. 
 Here, we propose a new successor based on the differential sign abstraction which gives a more refined simulation graph, since it captures all the transitions thanks to the sign of the differential while respecting basic continuity constraints.  
 In particular, a trajectory should not transition directly from a positive value to a negative one without passing through zero, and vice versa.
 
 \medskip
 
 To start, we introduce a new notion of successor in order to model the possible abstract transitions of continuous functions.
 
 \label{page:continuoussucc}
 \begin{definition}[Continuous-compatible successor]\label{def:continuity}
 	The \emph{continuous-compatible successor} relation $\Continue$ is defined as the set of pairs $\bigl((s_1,\delta_1), (s_2,\delta_2)\bigr) \in \dom{\slope}\times \dom{\slope}$ such that:
 	\begin{enumerate}
 		\item\label{constr: passzero} if $s_1 \neq 0$, $s_2 \in \{s_1,0\}$;
 		\item\label{constr: diffrole}$s_2 \in s_1 +^{\s}_{\set} \delta_1$;
 		\item\label{constr: immediate} if $s_1 = 0$ and $\delta_1 \neq \rightarrow$, $\delta_2 = \delta_1$.
 	\end{enumerate}
 \end{definition}
 In this last definition, the first point corresponds to the notion of continuity: a continuous function cannot transition from a nonzero value to a value of the opposite sign without crossing zero. The second point follows directly from the fact that $\delta_1$ denotes the sign of the differential, and therefore determines the direction of change of the function. Finally, the third point models that, if the differential sign is nonzero, then there exists a time interval (possibly arbitrarily small) during which the function preserves its direction of change. In particular, if the value of the function is $0$, on a possibly arbitrarily small time interval, the value has changed while the differential sign remains unchanged.
 
 \smallskip
 
 We can express $\Continue$ as a function $\Continue_{set}: \dom{\slope} \to \mathcal{P}(\dom{\slope})$ such that, for every $v\in \dom{\slope}, \Continue_{set}(v) = \{v' \mid (v,v')\in \Continue\}$.
 The $\Continue_{set}$ function falls down to be: 
 $$\begin{array}{lll}
 	\Continue_{set}(-1,\downarrow) &= & \{(-1,\downarrow),(-1,\rightarrow),(-1,\uparrow)\}\\
 	\Continue_{set}(-1,\rightarrow) &= & \{(-1,\downarrow),(-1,\rightarrow),(-1,\uparrow)\}\\
 	\Continue_{set}(-1,\uparrow) &= & \{(-1,\downarrow),(-1,\rightarrow),(-1,\uparrow),(0,\downarrow),(0,\rightarrow),(0,\uparrow)\}\\
 	\Continue_{set}(0,\downarrow) &= & \{(-1,\downarrow)\}\\
 	\Continue_{set}(0,\rightarrow) &= & \{(0,\downarrow),(0,\rightarrow),(0,\uparrow)\}\\
 	\Continue_{set}(0,\uparrow) &= & \{(1,\uparrow)\}\\
 	\Continue_{set}(1,\downarrow) &= & \{(1,\downarrow),(1,\rightarrow),(1,\uparrow),(0,\downarrow),(0,\rightarrow),(0,\uparrow)\}\\
 	\Continue_{set}(1,\rightarrow) &= & \{(1,\downarrow),(1,\rightarrow),(1,\uparrow)\}\\
 	\Continue_{set}(1,\uparrow) &= & \{(1,\downarrow),(1,\rightarrow),(1,\uparrow)\}.\\
 \end{array}$$
 
 We now prove that the Continuous-compatible successor captures all and only the abstract transitions that are possible in continuous functions. 
 
 \begin{lemma}\label{lemma:ct} The set of continuous-compatible successors $\Continue$ is the union of differential sign transitions of functions of class $\mathcal{C}^1$ with a connected domain of definition, i.e.
 	
 	$$\Continue = \bigcup_{f \in  \mathcal{C}^1} \ATrans{\Absfun}{f}.$$
 	
 \end{lemma}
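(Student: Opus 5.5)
We prove the two inclusions separately. The inclusion $\supseteq$ is a soundness statement and uses only local properties of $\mathcal{C}^1$ functions. The inclusion $\subseteq$ is a realisability statement: for each of the finitely many pairs listed in $\Continue_{set}$ we exhibit an explicit $\mathcal{C}^1$ function realising it.

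For $\supseteq$, fix $f\in\mathcal{C}^1$ with connected domain and a pair $(\alpha,\beta)\in\ATrans{\Absfun}{f}$. By Definition~\ref{def:seq} there are $t_1\le t_2$ with $\alpha=(s_1,\delta_1)=\Absimage{f}(t_1)$ and $\beta=(s_2,\delta_2)=\Absimage{f}(t_2)$, and $\Absimage{f}$ is constant on $[t_1,t_2[$ or on $]t_1,t_2]$. The alternative ``$f$ non defined on $]t_1,t_2[$'' is excluded because the domain is connected. The case $t_1=t_2$ is trivial: we only need $\Continue$ to be reflexive, which is immediate from the table. For $t_1<t_2$ we check the three conditions of Definition~\ref{def:continuity}.

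Condition~\ref{constr: passzero}: suppose $s_1\neq0$. If $\Absimage{f}$ is constant on $[t_1,t_2[$, then $f$ has sign $s_1$ on that interval and, by continuity, $s_2\in\{s_1,0\}$. If $\Absimage{f}$ is constant on $]t_1,t_2]$, then $f$ has sign $s_2$ on $]t_1,t_2]$ and, by continuity at $t_1$, $s_1\in\{s_2,0\}$; since $s_1\ne0$ this forces $s_2=s_1$.

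Condition~\ref{constr: diffrole}: by definition of $\Absdiff{\homo}{f}(t_1)$, $f$ has sign $\delta_1$ relative to $f(t_1)$ on some $]t_1,t_1+\epsilon[$. In the $]t_1,t_2]$ case this forces $s_2$, the constant sign on $]t_1,t_2]$, to lie in $\homo(f(t_1)+\text{small quantity of sign }\delta_1)\subseteq s_1+^{\s}_{\set}\delta_1$. In the $[t_1,t_2[$ case the sign of $f$ is constantly $s_1$ on $[t_1,t_2[$, so $s_2\in\{s_1,0\}$. When $s_1\ne0$ this is fine, because $0\in s_1+\delta_1$ unless $\delta_1$ has the same sign as $s_1$. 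This leaves two cases to exclude: $s_2=0$ with $\delta_1=s_1$, and the case $s_1=0$ with $\delta_1\ne0$, where $f$ would be both zero on $[t_1,t_2[$ and strictly signed near $t_1^+$, which is a contradiction. The first case, $s_2=0$ with $\delta_1=s_1$, is the main obstacle. The intended reading of Definition~\ref{def:seq} is that the transition is \emph{immediate}, so I would argue it must be blocked by the constancy requirement: a constant $\Absimage{f}$ on $[t_1,t_2[$ with value $(s_1,\delta_1)$ means $f$ moves away from zero on the whole interval, so $f$ cannot reach $0$ at $t_2$. The key step is that $\Absdiff{\homo}{f}(t)=\delta_1$ for all $t$ in an interval forces $f$ to be strictly monotone there. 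This follows from a standard lemma: a continuous function whose right-differential sign is everywhere positive on an interval is increasing. It needs a small proof via a supremum argument.

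Condition~\ref{constr: immediate}: if $s_1=0$ and $\delta_1\ne\rightarrow$, the $[t_1,t_2[$ case is impossible by the argument above. In the $]t_1,t_2]$ case, $f$ is strictly monotone near $t_1^+$ in direction $\delta_1$, so the constant differential sign on $]t_1,t_2]$ equals $\delta_1$, giving $\delta_2=\delta_1$.

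For $\subseteq$, we go through the table of $\Continue_{set}$ and exhibit explicit witnesses built from polynomials, which are $\mathcal{C}^1$ on $\R$. For instance, $(0,\uparrow)\to(1,\uparrow)$ is realised by $t$; $(1,\downarrow)\to(0,\rightarrow)$ by $(t-1)^2$ on $[0,1]$; $(1,\downarrow)\to(0,\uparrow)$ by $1-t$ up to $t=1$ followed by a continuation; and $(0,\rightarrow)\to(0,\downarrow)$ by a function constantly $0$ and then $-(t-1)^2$, which is $\mathcal{C}^1$. Transitions with constant sign and changing differential sign, such as $(1,\uparrow)\to(1,\downarrow)$, come from $2-(t-1)^2$ near its maximum.
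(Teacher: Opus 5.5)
Your plan follows the paper's proof. You prove both inclusions, check the three conditions of Definition~\ref{def:continuity} from the constancy of $\Absimage{f}$ on $[t_1,t_2[$ or $]t_1,t_2]$, and give explicit witnesses. Your monotonicity lemma is exactly what the paper uses, without proof, when it says that $f(t_1)$, $f(t_1+\epsilon')$, $f(t_2)$ are ``sorted''.

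The degenerate case is mishandled, however. $\Continue$ is \emph{not} reflexive: the table gives $\Continue_{set}(0,\uparrow)=\{(1,\uparrow)\}$ and $\Continue_{set}(0,\downarrow)=\{(-1,\downarrow)\}$. If $t_1=t_2$ is admitted in Definition~\ref{def:seq}, then $f(t)=t$ with $t_1=t_2=0$ gives $((0,\uparrow),(0,\uparrow))\in\ATrans{\Absfun}{f}\setminus\Continue$. So under that reading the equality is false. The paper's proof starts from $t_1<t_2$, and you must state that restriction explicitly instead of dismissing the case.

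In the $\subseteq$ direction, your witness for $(1,\downarrow)\to(0,\uparrow)$ fails. Any $\mathcal{C}^1$ continuation of $1-t$ past $t=1$ has $f'(1)=-1$, hence $f<0$ just after $1$ and $\Absimage{f}(1)=(0,\downarrow)$. The zero must be reached with vanishing derivative, e.g.\ $(t-1)^2$ (the paper's $x^2$).

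Similarly, $(t-1)^2$ ``on $[0,1]$'' leaves $\Absdiff{\homo}{f}(1)$ undefined, and on all of $\R$ it yields $(0,\uparrow)$. You need to glue the constant $0$ on $[1,+\infty[$.

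More generally, none of the twelve pairs in which exactly one of $\delta_1,\delta_2$ is $\rightarrow$ comes from a polynomial. A polynomial whose differential sign is $\rightarrow$ at some point is constant, so its differential sign is $\rightarrow$ everywhere. For those twelve pairs, $\mathcal{C}^1$ gluings like the paper's family $g_{a,b}$ are unavoidable. Affine functions cover the eleven pairs with $\delta_1=\delta_2$, and parabolas cover the six with $\delta_1=-\delta_2\neq\rightarrow$. The enumeration has to actually cover all $29$ pairs.

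Finally, in Condition~\ref{constr: immediate}, $\Absdiff{\homo}{f}(t_1)=\delta_1$ does not make $f$ monotone on a right neighbourhood of $t_1$. For instance, $t^3(2+\sin(1/t))$, extended by $0$, is $\mathcal{C}^1$ with $\Absimage{f}(0)=(0,\uparrow)$, but it is not monotone on any $]0,\epsilon[$. Argue instead as follows. Apply your lemma on $]t_1,t_2]$, where the differential sign is constantly $\delta_2$. Together with continuity at $t_1$, this gives $\homo(f(t)-f(t_1))=\delta_2$ for all $t\in\,]t_1,t_2]$, whence $\delta_1=\delta_2$.
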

 \begin{proof}
 	$(\supseteq)$ Considering any function $f\in \mathcal{C}^1$ and any pair $((s_1,\delta_1),(s_2,\delta_2)) \in \ATrans{\Absfun}{f}$, we first show that $ \Continue \supseteq \ATrans{\Absfun}{f}$. 	By Definition \ref{def:seq}, it exists $t_1$ and $t_2$ such that $t_1<t_2$, $(s_1,\delta_1) = \Absimage{f}(t_1)$, $ (s_2,\delta_2) = \Absimage{f}(t_2)$, and $ \Absimage{f}$ is constant on $[t_1,t_2[$ or $]t_1,t_2]$. In order to have $((s_1,\delta_1), (s_2,\delta_2)) \in \Continue$, the three conditions of Definition \ref{def:continuity} must be satisfied.
 	
 	\textbf{Constraint \ref{constr: passzero}: if $s_1 \neq 0$ then $s_2 \in\{0,s_1\}$.} 
 	In this case, we need to prove that $s_2 \neq - s_1$. 
 	Let us suppose that $s_2 = - s_1$. In this case, by continuity of $f$, it exists a $t\in ]t_1,t_2[$, such that $f(t)=0$.
 	This is a contradiction with the fact that $ \Absimage{f}$ is constant over $[t_1,t_2[$ or $]t_1,t_2]$.
 	Therefore,  $s_2 \in\{0,s_1\}$ and $((s_1,\delta_1),(s_2,\delta_2))$ satisfies Condition~\ref{constr: passzero}.
 	
 	\textbf{Constraint \ref{constr: diffrole}: $s_2 \in s_1 +^{\s}_{\set} \delta_1$.} 
 	Since, it is trivially true that for any $t\in \R_+$,$$(f(t) - f(t_1)) + f(t_1) = f(t),$$
 	applying the homomorphism $\homo$, for $t\in \R_+$, we have:
 	\begin{equation}\label{eq:trivhomo}\homo(f(t)) \in  \homo(f(t_1)) +^{\s}_{set} \homo(f(t) - f(t_1)).
 	\end{equation}
 	We recall that $$\delta_1 = \lim_{\epsilon \to 0_+}  \Bigl( \homo \bigl(f(t_1+\epsilon)- f(t_1)\bigr)\Bigr).$$
 	As $\slopeset$ is a finite set and $\delta_1$ exists (since $\Absimage{f}(t_1)$ is defined), the limit is reached for some $\epsilon_1 >0$. Therefore, for any $\epsilon$ such that $0<\epsilon\leq \epsilon_1$, it holds that
 	\begin{equation}\label{eq:appdiff}\delta_1 = \homo (f(t_1+\epsilon) - f(t_1)).
 	\end{equation}
 	
 	Let us now consider two possibles cases:  $\Absimage{f}$ is constant on $]t_1,t_2]$ or $\Absimage{f}$ is constant on $[t_1,t_2[$.
 	
 	If $\Absimage{f} \text{ is constant on } ]t_1,t_2]$ then with an $\epsilon$ smaller than $t_2$, $\homo(f(t_1+\epsilon)) = s_2$. 
 	Therefore, by applying \eqref{eq:trivhomo} to $t =t_1+\epsilon$ and with this last observation we obtain $$s_2 \in  \homo(f(t_1)) +^{\s}_{set} \homo(f(t+\epsilon) - f(t_1)).$$ With \eqref{eq:appdiff} and the definition $s_1 =  \homo(f(t_1))$, we obtain $s_2 \in s_1 +^{\s}_{set} \delta_1$.
 	
 	If $\Absimage{f}$ is constant on $[t_1,t_2[$, we need to consider two possibilities.
 	If $f(t_1) = f(t_2)$, it holds that $\homo(f(t_1+\epsilon)) = s_2$ and we therefore apply the previous reasoning.
 	If $f(t_1)\neq f(t_2)$, by continuity and the constancy of $\Absimage{f}$ on $[t_1,t_2[$, we have that $f(t_1)$, $f(t_1+ \epsilon')$ and $f(t_2)$ are sorted by increasing or decreasing order, i.e.
 	$$\forall\epsilon' \in \,]0 ,t_2-t_1[, \,\,\,\,\,\,f(t_1)\lessgtr f(t_1 + \epsilon')\lessgtr f(t_2).$$ 
 	Hence, $\homo(f(t_2)-f(t_1)) = \homo((f(t_1+\epsilon) - f(t_1))$. 
 	Thus, applying \eqref{eq:appdiff}, we obtain that $\homo(f(t_2)-f(t_1)) = \delta_1$. Finally, according to \eqref{eq:trivhomo} with $t=t_2$, we obtain $s_2 \in s_1 +^{\s}_{set} \delta_1$.

 	\textbf{Constraint \ref{constr: immediate}: if $s_1 = 0$ and $\delta_1 \neq \rightarrow$ then $\delta_1 = \delta_2$.} 
 	According to \eqref{eq:appdiff}, we know that the abstraction of the differential is constant for $[t_1,t_1+\epsilon[$. 
 	However, the sign of $f$ will be different from $0$ immediately after $t_1$. 
 	Therefore, the abstracted successor state, for every time point in $[t_1,t_2[$, will have $s_2 \neq 0$ and $\delta_1=\delta_2$.
 	To conclude, we have that $((s_1,\delta_1),(s_2,\delta_2))$ satisfy Condition~\ref{constr: immediate}.
 	
 	\bigskip
 	
 	$(\subseteq)$ It remains to show that for all pair $((s_1,\delta_1),(s_2,\delta_2))\in \Continue$, there exists a $f \in \mathcal{C}^1$ such that $((s_1,\delta_1),(s_2,\delta_2))\in\ATrans{\Absfun}{f}$. We can separate these transitions into three categories.
 	
 	\textbf{Transitions $((s_1,\delta_1),(s_2,\delta_2))$ such that $\delta_1 = \delta_2$:} these transitions 
 	can be abstracted from affine functions. antonelli cristal inria lille

 	\textbf{Transitions such that the differential sign changes to its opposite (i.e., $\delta_1 = -\delta_2$):} these transitions can be obtained as abstractions of parabolas. Indeed, we have:
 	\begin{itemize}
 		\item $\fexpr[x^2]$ gives $\left((1,\downarrow), (0,\uparrow)\right)$ and the opposite $\fexpr[-x^2]$ gives $\left((-1,\uparrow), (0,\downarrow)\right)$;
 		\item $\fexpr[x^2+1]$ gives $\left((1,\downarrow), (1,\uparrow)\right)$ and $\fexpr[-x^2-1]$ gives $\left((-1,\uparrow), (-1,\downarrow)\right)$;
 		\item $\fexpr[-x^2+1]$ gives $\left((1,\uparrow), (1,\downarrow)\right)$ and $\fexpr[x^2-1]$ gives $\left((-1,\downarrow),(-1,\uparrow)\right)$.
 	\end{itemize}
 	
 	\textbf{Transitions where one of the two differential signs is zero}: here, we can
 	consider the following family of functions:		
 	$$ g_{a,b}(x) = \left\{\begin{array}{l}
 		a\mult x^2  + b\text{ for } x\leq 0 \\
 		b\text{ for } 0\leq x \leq 1 \\
 		a(x-1)^2 +b \text{ for } x > 1. \\
 	\end{array}\right.
 	$$
 	
 	All functions of this family are $\mathcal{C}^1$. 
 	The transitions from $\Absimage{g_{a,b}}(0)$ for any $a,b\in \{-1,0,1\}$ correspond to the set of all transitions $((s_1,\delta_1),(s_2,\rightarrow))\in \Continue$.
 	Similarly, the transitions from $\Absimage{g_{a,b}}(1)$or any $a,b\in \{-1,0,1\}$ correspond to the set of all transitions $((s_1,\rightarrow),(s_2,\delta_2))\in \Continue$.
 	
 	At this point, we have covered all the transitions in $\Continue$ with differential sign transitions of functions of class $\mathcal{C}^1$.
 	
 	To conclude, as we have both inclusions, we have $\Continue = \bigcup_{f \in  \mathcal{C}^1} \ATrans{\Absfun}{f}$.
 	
 \end{proof}
 
 Now, we will restrict this set of possible transitions according to the dynamics dictated by the differential equation $\constr{N}$.
 This idea is captured by the differential sign successors. 
 
 \begin{definition}[Differential sign successor]\label{def:diff-succ}
 	Let $N=(S,R)$ be a reaction network.
 	The \emph{differential sign successor} relation $\Succ[\slope](N)$ is defined as the set of pairs $(\alpha,\beta)$, with $\alpha,\beta \in \slopeset^{ S \cup\rond{S}\cup \labels (R)}$, satisfying the following three conditions:
 	\begin{enumerate}
 		\item \textbf{[Induced dynamics]}\label{constr: succdiff} $\alpha,\beta \in \sol[\slope](\constr{N}); $
 		
 		\item \textbf{[Continuity]} \label{constr: succcont}for all $ x \in S\cup \labels (R)\cup \rond{S}$,
 		$ (\alpha(x),\beta(x)) \in \Continue;$
 		\item  \textbf{[Causality]} \label{constr: succcausal} for all $x \in S$,
 		$\pi_2(\alpha(x)) = \pi_1(\alpha(\rond{x})) \text{ and }\pi_2(\beta(x)) = \pi_1(\beta(\rond{x})).$
 	\end{enumerate}
 	
 \end{definition}
 
 \begin{example}\label{ex:difsignsucc}
 	Let us apply Definition \ref{def:diff-succ} over the three reaction networks in Figure~\ref{fig:threeRN}.
 	Starting from $\Nsimpl$ and considering any initial state $\asimpl\in \slopeset^{\{A,\rond{A},r1,r2\}}$ such that $\asimpl(A) = (1,\rightarrow)$, we aim to compute the set of differential sign successors. 
 	Any such state $\asimpl$ may admit a successor if and only if it satisfies the following two conditions:
 	\begin{enumerate}[label=\Roman*]
 		\item $\asimpl \in \sol[\slope](\constr{\Nsimpl}),$ (from condition \ref{constr: succdiff}), \label{alphacond1}
 		\item $\forall x \in S,~\pi_2(\asimpl(x)) = \pi_1(\asimpl(\rond{x}))$ (from condition \ref{constr: succcausal}). \label{alphacond2}
 	\end{enumerate}
 	Recall that $\constr{\Nsimpl}$ is defined as follows :
 	$$\constr{\Nsimpl} := r1\EQUAL 1  \land r2  \EQUAL A \land \rond{A} \EQUAL r1 - r2.$$
 	From \ref{alphacond1}, we determine that $\asimpl(r1) = \asimpl(r2)= (1,\rightarrow)$.
 	At this point, we cannot uniquely determine $\asimpl(\rond{A})$:
 	$$\asimpl(\rond{A})  \in (1,\rightarrow) -^{\slope}_{set} (1,\rightarrow) = \Sset\times\{\rightarrow\}.$$	
 	Using \ref{alphacond2}, we can deduce $\pi_1(\asimpl(\rond{A})  ) = 0$, and conclude that $\asimpl(\rond{A})=(0,\rightarrow)$.
 	
 	Now that we have computed the complete initial state 
 	$$\asimpl := [A=(1,\rightarrow); \rond{A} =(0,\rightarrow); r1 = (1,\rightarrow); r2 = (1,\rightarrow) ], $$
 	let us compute a possible differential sign successor $\asimpl'$ of $\asimpl$.
 	We recall that $ \Continue_{set} (1,\rightarrow) = \{1\}\times\Sset $ and $ \Continue_{set} (0,\rightarrow) = \{0\}\times\Sset. $
 	With Condition \ref{constr: succcausal}, we obtain $\asimpl'(A) = (1,\rightarrow)$. 
 	Since a differential sign successor is a solution of $\constr{\Nsimpl}$, we obtain that
 	$\asimpl'=\asimpl$ and therefore $\asimpl$ is the unique differential sign successor of itself. 
 	
 	We now proceed to apply Definition \ref{def:diff-succ} on $\NAct$ considering an initial state $\aAct\in \slopeset^{\{A,\rond{A};\Act,\rond{\Act},\rAct,r1,r2\}}$ such that $\aAct(A) = (1,\rightarrow)$, and $\aAct(\Act) = (1,\uparrow)$, we aim to compute its differential sign successors.
 	In order to admit a successor, $\aAct$ must be a solution of $\constr{\NAct}$, defined as follows:
 	$$\constr{\NAct} := \rAct \EQUAL 1\land r1\EQUAL \Act  \land r2  \EQUAL A \land \rond{\Act}\EQUAL \rAct \land \rond{A} \EQUAL r1 - r2.$$
 	From this, we determine that $\aAct(\rAct) = \aAct(\rond{\Act})=\aAct(r2) = (1,\rightarrow)$, and $\aAct(r1)=\aAct(\Act) = (1,\uparrow)$. 
 	At this point, we cannot uniquely determine $\aAct(\rond{A})$ since we have
 	$$\aAct(\rond{A})  \in (1,\uparrow) -^{\slope}_{set} (1,\rightarrow) = \Sset\times\{\uparrow\}.$$	
 	Using Condition \ref{constr: succcont}, we can deduce $\pi_1(\aAct(\rond{A}) ) = 0$, and conclude that $\aAct(\rond{A})=(0,\uparrow)$.
 	Now that we have computed the complete initial state 
 	$$\aAct := [A=(1,\rightarrow); \rond{A} =(0,\uparrow); \Act = (1,\uparrow); \rond{\Act} = (1,\rightarrow); \rAct = (1,\rightarrow);r1 = (1,\uparrow); r2 = (1,\rightarrow) ], $$	
 	let us compute a possible differential sign successor $\aAct'$ of $\aAct$.
 	From Condition~ \ref{constr: succcont}, we have that the sign of each variable is $1$, therefore by Condition \ref{constr: succcausal}, we can determine that $\aAct(A) = \aAct(\Act) = (1,\uparrow)$.	
 	Since a differential sign successor is a solution of $\constr{\NAct}$, we obtain that	
 	$$\aAct' := [A=(1,\uparrow); \rond{A} =(1,\uparrow); \Act = (1,\uparrow); \rond{\Act} = (1,\rightarrow); \rAct = (1,\rightarrow);r1 = (1,\uparrow); r2 = (1,\uparrow) ]$$
 	is the only differential sign successor of $\aAct$. 
 	
 	The difference between $\NAct$ and $\NInh$ is the kinetics of $r1$. We consider an initial state $\aInh\in \slopeset^{\{A,\rond{A};\Inh,\rond{\Inh},\rInh,r1,r2\}}$, such that $\aInh(A) = (1,\rightarrow)$, and $\aInh(\Act) = (1,\uparrow)$, to compute the differential sign successors for $\NInh$.
 	
 	In $\constr{\NInh}$, we have $r1\EQUAL \frac{1}{1+\Inh}$ and since
 	$$ (1,\rightarrow) /^{\slope}_{set} \left(  (1,\rightarrow) +^{\slope}_{\set} (1,\uparrow)\right) = (1,\rightarrow) /^{\slope}_{set} (1,\uparrow) = \{(1,\downarrow)\},$$
 	we have $\aInh(r1) = (1,\downarrow)$. By applying the same reasoning on each variable and taking Condition \ref{constr: succcausal} into account, we obtain
 	$$\aInh := [A=(1,\rightarrow); \rond{A} =(0,\downarrow); \Inh = (1,\uparrow); \rond{\Inh} = (1,\rightarrow); \rInh = (1,\rightarrow);r1 = (1,\uparrow); r2 = (1,\rightarrow) ]. $$
 	The only differential sign successor of $\aInh$ is: 	
 	$$\aInh':= [A=(1,\downarrow); \rond{A} =(-1,\downarrow); \Inh = (1,\uparrow); \rond{\Inh} = (1,\rightarrow); \rInh = (1,\rightarrow);r1 = (1,\downarrow); r2 = (1,\downarrow) ]. $$		
 \end{example} 
 
 \begin{figure}
 	\centering
 	\begin{subfigure}{0.30\textwidth}
 		\centering
 		\begin{tikzpicture}[scale=1]
 			\node[ellipse, draw] (A) at (-2.5,1.5) {$(1,\rightarrow$)};
 			\path[->] (A) edge[loop right] ();
 		\end{tikzpicture}
 		\caption{\label{fig:diffsimpl}Transition graph from $\asimpl$ for $\Nsimpl$.}
 	\end{subfigure}
 	\begin{subfigure}{0.30\textwidth}
 		\centering
 		\begin{tikzpicture}[scale=1]
 			\node[ellipse, draw] (D) at (1.5,1.5) {($(1,\rightarrow$)};	
 			\node[ellipse, draw] (E) at (1.5,0) {$(1,\uparrow$)};
 			\draw[->,>=latex] (E) edge[loop right] ();
 			\draw[<->,>=latex] (E) to (D);	
 		\end{tikzpicture}
 		\caption{\label{fig:diffAct}Transition graph from $\aAct$ for $\NAct$.}	
 	\end{subfigure}
 	\begin{subfigure}{0.30\textwidth}
 		\centering
 		\begin{tikzpicture}[scale=1]
 			\node[ellipse, draw] (C) at (5,0) {$(1,\downarrow$)};
 			\node[ellipse, draw] (F) at (5,1.5) {$(1,\rightarrow$)};
 			\draw[<->,>=latex] (C) to (F);
 			\path[->] (C) edge[loop right] ();		
 		\end{tikzpicture}
 		\caption{\label{fig:diffInh}Transition graph from $\aInh$ for $\NInh$.}	
 	\end{subfigure}
 	\caption{\label{fig:difsignsharp}The projections on $A$ of the graphs obtained according to the differential sign successor relation of reaction network in Figure~\ref{fig:threeRN} starting from the initial states of Example \ref{ex:difsignsucc}.}
 \end{figure}
 In Figure \ref{fig:signtoocoarse}, the graph obtained according to the sign successor relation representing the possible trajectories from the initial state $(1,0)$ contains $11$ edges. 
 According to the differential sign successor relation, the number of possible successors is reduced to only one (see Figure \ref{fig:diffsimpl}). 
 This reduction in the number of successors is correct since the initial state should be stable. 
 Moreover, the differential sign successor faithfully captures the impact of activators and inhibitors, whereas the sign successor fails to capture their influence in the network. 
 Figure~\ref{fig:graphfinal} shows the graph obtained according to the differential sign successor that, in addition to properly treating the evolution of the system when some species are absent, it clearly represents the possible dynamics when both species are present: either with the oscillation or the stability of populations.
 Considering the number of edges in the graphs after projection onto the signs of species and their derivatives, $\Succ[\s](LV)$ turns out to have $92$ edges while $\Succ[\slope](LV)$ presents $29$ edges, and $\rondcnextstep[\slopeset](LV)$ contains $17$ edges. 
 This significant reduction highlights the increased precision of the differential sign successor with respect to sign successor.

 \begin{figure}
 	\centering
 	\includegraphics[width=10cm]{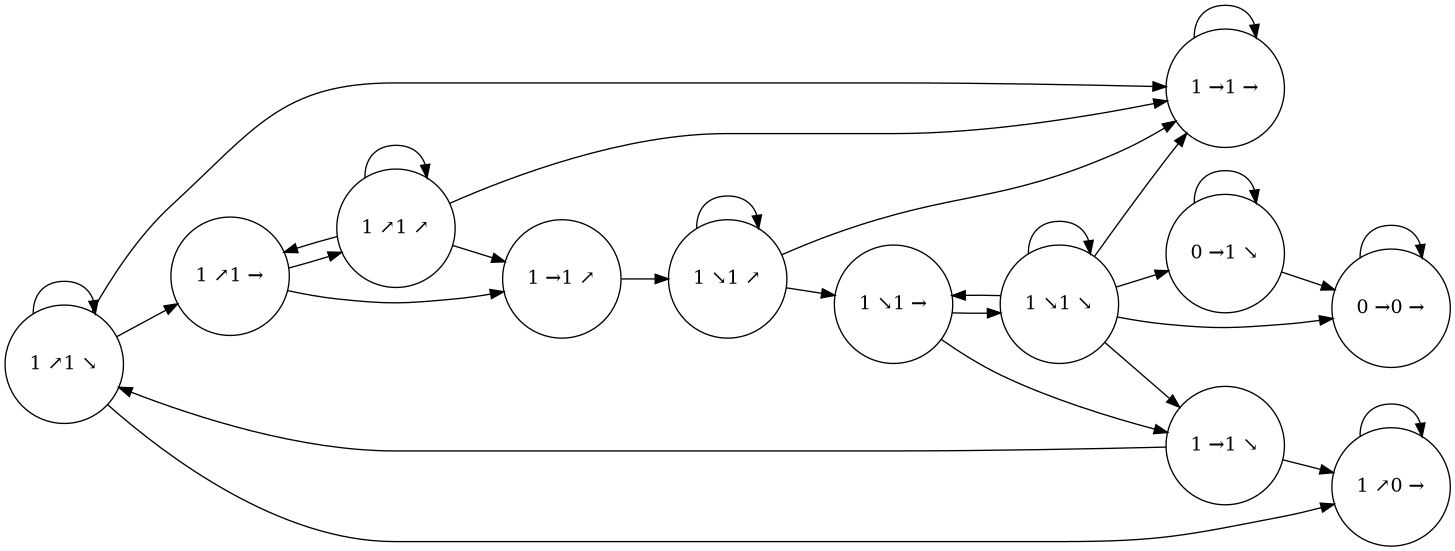}
 	\caption{\label{fig:graphfinal}The transition graph induced by the differential sign successor for the $LV$ network, projected on values of species $Ra,\rond{Ra},Fo,\rond{Fo}$ (in this order). Self-arrows indicate that at least one of the states projected on the vertex is a successor of itself.}
 \end{figure}

 We now show that given a reaction network, its differential sign successor is a sound abstraction of the causal continuous semantics.
 \begin{theorem}[Soundness]\label{th:soudness}
 	For any reaction network $N$, its differential sign successor includes its causal next, i.e. 
 	$$
 	\rondcnextstep[\slopeset](N) \subseteq \Succ(N).
 	$$
 \end{theorem}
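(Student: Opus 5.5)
We fix a pair $(\alpha,\beta)\in\rondcnextstep[\slopeset](N)$. By Definition~\ref{def: cnext} there is a causal continuous prediction $f\in\causalsolution(N)$ and times $t_1\le t_2$ in $\R_+\setminus\cSeg$ with $\alpha=\Absimage{f}(t_1)$ and $\beta=\Absimage{f}(t_2)$. Moreover, either $\Absimage{f}$ is constant on $[t_1,t_2[$ or $]t_1,t_2]$ (restricted to the complement of the delays), or $]t_1,t_2[$ is a delay. We then check the three conditions of Definition~\ref{def:diff-succ} separately.

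\textbf{Induced dynamics.} We show that at every $t\in\R_+\setminus\cSeg$ the assignment $x\mapsto\Absimage{f_x}(t)$ lies in $\sol[\slope](\constr{N})$. By the consistency constraint of Definition~\ref{def: ccpred}, there is an open neighbourhood $\mathcal{O}$ of $t$ and some $f'\in\solrond(\ADE{N})$ agreeing with $f$ on $\mathcal{O}\setminus\cSeg$. Since $f'$ solves $\ADE{N}$ together with $\rond{x}\EQUAL\dot{x}$, it solves the $\Symbs$-formula $\constr{N}$ in $\Fun$, and Lemma~\ref{lemma:dt9homorphism} then gives $[x\mapsto\Absimage{f'_x}(t)]\in\sol[\slope](\constr{N})$.

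It remains to show $\Absimage{f'_x}(t)=\Absimage{f_x}(t)$. The values agree at $t$. For the differential component, the right-neighbourhood of $t$ may enter a delay. There we use the admissible-delay condition: $\Absimage{f}$ is constant on $[t,t']$ at a delay start, and on a delay for $S_1$ the fixed species have $f_{\rond{x}}(t)=0$. From this we argue that the differential signs coincide. Handling this carefully is, in my view, the \textbf{main obstacle}: it requires a case analysis according to whether $t$ is the left endpoint of a delay. If it is, then for $x\in S_1$ we have $\Dplus f_x(t)=0$, so $f$ is held constant, while $f'$ may mismatch. The resolution I would try is to replace $f'$ by the solution of $\fix[S_1](N)$, which also solves $\constr{N}$ because $\rond{x}\EQUAL e_x$ is kept and $\dot{x}=0$ makes the $x$-component constant. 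Applying Lemma~\ref{lemma:dt9homorphism} to that function instead then yields the claim for such $t$.

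\textbf{Continuity.} For each variable $x$ we need $(\alpha(x),\beta(x))\in\Continue$. If $\Absimage{f}$ is constant on $[t_1,t_2[$ or $]t_1,t_2]$, then $(\alpha(x),\beta(x))\in\ATrans{\Absfun}{f_x}$. The function $f_x$ is $\mathcal{C}^1$ on the relevant piece: it is continuous, and it solves $\ADE{N}$ or $\fix(N)$ there. Lemma~\ref{lemma:ct} therefore applies, after restricting to a $\mathcal{C}^1$ piece or using only the one-sided arguments of its proof, which need just continuity and local behaviour. If instead $]t_1,t_2[$ is a delay, the admissible-delay condition says $\Absimage{f_x}$ is constant on $[t_1,t_2[$ or $]t_1,t_2]$. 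So the transition from $t_1$ to $t_2$ is again an abstract transition of the continuous function $f_x$, and Lemma~\ref{lemma:ct} applies.

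\textbf{Causality.} For $x\in S$ we need $\pi_2(\Absimage{f_x}(t))=\pi_1(\Absimage{f_{\rond{x}}}(t))$ at $t=t_1,t_2$, that is, $\Absdiff{\homo}{f_x}(t)=\homo(f_{\rond{x}}(t))$. Outside delays we have $f_{\rond{x}}(t)=\Dplus f_x(t)$, because $f$ locally agrees with a solution of $\rond{x}\EQUAL\dot{x}$. At the start of a delay, $\rond{x}$ equals $e_x$, which is $0$ by the admissible-delay condition, and $f_x$ is held constant, so both sides equal $0$. In either case the causality constraint of Definition~\ref{def: ccpred} (Definition~\ref{def:match}) gives exactly this equality. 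The three conditions together give $(\alpha,\beta)\in\Succ(N)$.
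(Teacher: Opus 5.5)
Your proof has the same skeleton as the paper's. It checks the same three conditions. For continuity it uses the same split (either $[t_1,t_2]\subseteq\DomF{f}\setminus\cSeg$ or $]t_1,t_2[$ is a delay) feeding Lemma~\ref{lemma:ct}. For causality it uses the same case analysis: not a delay start; delay start with $x\notin S_1$; delay start with $x\in S_1$.

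The genuine difference is the induced-dynamics step, where the paper never meets your ``main obstacle''. Since $\constr{N}$ is a $\Symbs$-formula without the derivative symbol, membership in $\sol(\constr{N})$ is a pointwise condition. Substituting $\rond{x}$ for $\dot{x}$ gives $\solrond(\ADE{N})\subseteq\sol(\constr{N})$ and $\sol(\fix(N))\subseteq\sol(\constr{N})$; you use the second inclusion only at delay starts. Every point of $\DomF{f}$ either lies in a delay or, by consistency, is a point where $f$ coincides with some $f'\in\solrond(\ADE{N})$. Hence the prediction $f$ itself belongs to $\sol(\constr{N})$. Lemma~\ref{lemma:dt9homorphism} applied directly to $f$ then gives $\Absimage{f}(t)\in\sol[\slope](\constr{N})$ for every $t$, with $f$'s own differential components. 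No comparison between $\Absimage{f}(t)$ and $\Absimage{f'}(t)$ is ever required.

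Your surrogate route needs $f=f'$ on a whole right neighbourhood of $t$, but consistency only provides this on $\mathcal{O}\setminus\cSeg$. It works at delay starts and wherever a delay-free right neighbourhood exists. It leaves open the points at which delays accumulate from the right. The paper's causality step uses the same local argument and tacitly takes $f=f'$ on all of $\mathcal{O}$, so that subtlety is shared there.

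On the continuity side, you note that Lemma~\ref{lemma:ct} can only be used through its continuity-based $(\supseteq)$ arguments, since $f_x$ need not be $\mathcal{C}^1$ across delay endpoints. This makes explicit a point the paper passes over when it applies the lemma to merely continuous $f_x$.
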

 \begin{proof}
 	
 	We consider a reaction network $N=(S,R)$ and a pair of states $(\alpha,\beta) \in \rondcnextstep[\slopeset](N)$. 
 	According to Definition~\ref{def: cnext}, it exists a function $f \in \causalsolution(N)$, and two time points $t_1,t_2\in \DomF{f}\setminus \cSeg$  such that  $\Absimage{f}(t_1) = \alpha$ and $\Absimage{f}(t_2) = \beta$. 
 	We prove that $(\alpha,\beta)$ satisfies the three constraints of Definition~\ref{def:diff-succ}.
 	\begin{enumerate}
 		\item  \textbf{Induced dynamics: }$\alpha,\beta \in \sol[\slope](\constr{N})$.\\
 		The $\dot{\Symbs}$-structure $\Fun$ has the particularity to interpret each of its operators as a partial function.
 		This has the consequence that the interpretation of an expression in this structure is at most a singleton. 
 		Hence, any function $f$ is a solution of $e_1\EQUAL e_2$  if and only if the interpretations of $e_1$ and $e_2$ are equal to the same singleton. 
 		Therefore, with the structure $\Fun$, the operator $\EQUAL$ is interpreted as the standard algebraic equality so that we can substitute expressions in formulas. 
 		Let us recall that the formula $\solrond(\ADE{N})$ is defined as follows: 
 		$$ \bigwedge_{r\in \labels(R)} r \EQUAL e_r ~\bigwedge_{x\in S} \left(\dot{x} \EQUAL \exprdex \land \rond{x} \EQUAL \dot{x} \right)$$
 		where $e_x =  \sum_{r\in \labels (R)} \Stcoef{r}{x}\mult r$.
 		If we substitute $\dot{x}$ by $\rond{x}$ in formula $\solrond(\ADE{N}))$, we obtain
 		$$\bigwedge_{r\in \labels(R)} r \EQUAL e_r ~\bigwedge_{x\in S} \rond{x} \EQUAL e_x$$
 		which falls down to be $\constr{N}$.
 		This implies that $ \solrond(\ADE{N})) \subseteq \sol(\constr{N})$.
 		Using the same idea, we have that $ \sol(\fix(N))\subseteq \sol(\constr{N})$ for all $S_1\subseteq S$.
 		As $\causalsolution(N)$ is a set of functions constructed piecewise from these solutions, and $\constr{N}$ is a $\Symbs$-formula, we have $\causalsolution(N) \subseteq \sol(\constr{N})$. Therefore, we have $f\in \sol(\constr{N})$ and applying Lemma \ref{lemma:dt9homorphism}, 
 		we obtain that, for all $t\in \DomF{f}$, $\Absimage{f}(t)\in \sol[\slope](\constr{N})$.		
 		\item  \textbf{Continuity: }$\forall x \in S\cup\rond{S}\cup \labels (R), (\alpha(x),\beta(x)) \in \Continue$.\\
 		As  $(\alpha,\beta)\in\rondcnextstep[\slopeset](N)$, we have $(\alpha,\beta)\in 	\ATrans{\Absfun}{\restr{f}{\R_+\setminus\cSeg}}$.
 		By Constraint~\ref{constr: continuity}, each component $f_x$ of $f$ is continuous on a connected domain. According to Definition~\ref{def:seq}, we have two possible cases: either $\restr{f}{\R_+\setminus\cSeg}$ is defined on $[t_1,t_2]$ or it is not defined on $]t_1,t_2[$.
 		\begin{itemize}
 			\item If $[t_1,t_2]\subset \DomF{f} \setminus\cSeg$, then we trivially have that  $(\alpha,\beta)\in 	\ATrans{\Absfun}{f}$. Moreover, for each variable $x$, we have $(\alpha(x),\beta(x))\in 	\ATrans{\Absfun}{f_x}$.
 			\item Otherwise, $]t_1,t_2[$ is a delay (Definition ~\ref{def: delays}) of $f$. By Constraint~\ref{constr: small enough}, for each variable $x$, $\Absimage{f_x}$ is constant on either $[t_1,t_2[$ or $]t_1,t_2]$, therefore we have $(\alpha(x),\beta(x)) \in  \ATrans{\Absfun}{f_x}$.
 		\end{itemize}
 		
 		To conclude, in both cases we have that $(\alpha(x),\beta(x)) \in  \ATrans{\Absfun}{f_x}$ and $f_x$ is 
 		continuous on a connected domain therefore we can apply Lemma~\ref{lemma:ct} and obtain that $(\alpha(x),\beta(x)) \in \Continue$.
 		\item \textbf{Causality: } for all $x \in S$, $\pi_2(\alpha(x)) = \pi_1(\alpha(\rond{x})) \text{ and }\pi_2(\beta(x)) = \pi_1(\beta(\rond{x})).$\\
 		According to the definition of $\Absimage{\cdot}$, we have 
 		$$ \begin{array}{ll}
 			\pi_1(\alpha(\rond{x})) = \homo \circ f_{\rond{x}} (t_1),~&~ \pi_2(\alpha(x)) = \Absdiff{\homo}{f_x}(t_1),
 			\\
 			\pi_1(\beta(\rond{x})) = \homo \circ f_{\rond{x}} (t_2),~&~ \pi_2(\beta(x)) = \Absdiff{\homo}{f_x}(t_2).
 		\end{array}
 		$$ 
 		
 		From Constraint~\ref{constr: causal}, we have that $f$ is causal (Definition~\ref{def:match}) which implies that for each $x\in S$ 
 		and $t\in \DomF{f}$, we have $$ \Absdiff{h_\s}{f_x}(t) = \homo \circ \Dplus f(t).$$
 		It now remains to prove that for all $t\in \DomF{f}\setminus \cSeg$ (and a fortiori for $t_1$ and $t_2$) we have: 
 		$$ f_{\rond{x}} (t) = \Dplus f(t).$$
 		There are two possible cases, whether or not it exists $t' >t$ such that $]t,t'[$ is a delay.
 		If $t$ is not a left boundary of a delay, then by Constraint \ref{constr: real ADE}, there exist an open set $\mathcal{O}$ such that $t \in \mathcal{O}$ and a function $f'\in \solrond(\ADE{N})$ such that $\restr{f}{\mathcal{O}} = \restr{f'}{\mathcal{O}}$. Therefore, we have $\Dplus f_x (t) = \Dplus f'_x (t)$. 
 		Since $f'$ must satisfy $\dot{x} \EQUAL \rond{x}$ (by the fact that $f'\in \solrond(\ADE{N})$), we obtain $\Dplus f_x (t) = f_{\rond{x}}(t)$. 
 		On the other hand, if there exists a $t'$ such that $]t,t'[$ is a delay of $f$ for some subset of species $S_1$ then we have two cases:
 		\begin{itemize} 
 			\item if $x\notin S_1$, since $f$ is continuous and must satisfy $\dot{x} \EQUAL \rond{x}$ on $]t,t'[$ (by the fact that $\restr{f}{]t,t'[}\in \FIX{S_1}{N}$), we obtain $\Dplus f_x (t) = f_{\rond{x}}(t)$. 
 			
 			\item if $x\in S_1$, then $f_{\rond{x}}(t) = 0$ (Constraint~\ref{constr: small enough}) and the formula $\fix(N)$ contains $\dot{x}\EQUAL 0$. 
 		\end{itemize}
 		By substitution in the causality equation, we obtain  
 		$\pi_2(\alpha(x)) = \pi_1(\alpha(\rond{x}))$ and $\pi_2(\beta(x)) = \pi_1(\beta(\rond{x})).$
 		
 	\end{enumerate}
 	To conclude, $(\alpha,\beta)$ satisfies the three constraints to be in $\Succ(N)$ (Definition \ref{def:diff-succ}).
 	
 \end{proof}
	\section{Conclusion}

In this paper, the goal has been to tackle the challenge of defining a new qualitative abstraction for reaction networks which captures the continuous dynamics taking into account causality and distinguishing activations and inhibitions in reactions.
To achieve this, we first noted the non-causal nature of continuous semantics and observed that it arises from a discrepancy between the sign of the derivative and that of the differential. 
These mismatches stem from the fact that the continuous semantics is obtained as a limit in which the time step tends to zero, thereby intuitively merging the next state with the current one. 
Therefore, we reintroduced a delay between a state and its successor when necessary, which results in the definition of the causal continuous semantics. 

Later, we introduced a new structure, called differential sign structure, abstracting the sign of both the value and the differential of all the variables of the system.
Using this structure, we defined the differential sign successor and showed that it is a sound abstraction of the causal continuous semantics.
To best of our knowledge, we indeed obtained the first abstraction of continuous semantics providing exploitable information about the differential behaviour of the variables, allowing us to distinguish between activators and inhibitors (contrarily to the sign semantics of reaction Networks).
As shown in examples, thanks to our new method we obtained appropriate Boolean transition graphs that refine those provided by the previous approach.

Future works include possible improvements of our approach (for example being able to exploit the difference between point-wise and segment-wise transitions) and the refinement of existing techniques for reachability analysis between steady states \cite{allart_computing_2019,niehren_predicting_2016} by combining them with the differential sign successor.

Another interesting direction is to investigate the applicability of our approach to partial reaction networks \cite{niehren_predicting_2016} which represent the most extreme case of missing information, where not only exact information about concentrations and kinetics parameters is not available, but also the kinetic expressions of the reactions are partially unknown. 

In addition, for the analysis of systems where faithfulness to the standard semantics based on ODEs is required, we aim to study the result of the application of the differential sign abstraction directly over ODEs and its relation with the transition graph obtainable thanks to our approach.

	\section*{Acknowledgements}
	This was supported by the French Agence Nationale pour la Recherche (ANR) in the scope of the project “REBON” (grant number ANR-23-CE45-0008).
	The authors would also like to thank Dr. Loïc Paulevé and Dr. Patrick Baillot for their scientific discussions regarding this work.

	\bibliographystyle{model1-num-names}
	\bibliography{refCMSB2025.bib}
	
	\newpage
	\appendix 
\section{Appendix 1}
\subsection{Sign interpretation of arithmetic operators}

Here we express explicitly the sign interpretation of the operators $\{+,-,*,/ \}$. 

Let us start with $+^\s $:
$$
\begin{array}{lllllllllll}
	1 +^{\s} 1& = &\{1\},&~~~&~1 +^{\s} 0& = &\{1\},&~~~&~1 +^{\s} -1& = &\{1,0,-1\},\\
	0 +^{\s} 0 & = &\{0\},&~~~&~0 +^{\s} -1 & = &\{-1\},&~~~&-1+^{\s} -1 & = &\{-1\},\\
	~& & &\forall x, y\in dom(\s),	&~x +^{\s} y & = & y +^{\s} x.& & & \\
\end{array}
$$

We now express $-^\s $:
$$
\begin{array}{lllllllllll}
	1 -^{\s} 1& = &\{1,0,-1\};&~~~~&~0 -^{\s} 1 & = &\{-1\};&~~~~&-1 -^{\s} 1 & = &\{-1\};\\
	1 -^{\s} 0& = &\{1\};&~~~~&~0 -^{\s} 0 & = &\{0\};&~~~~&-1 -^{\s} 0 & = &\{-1\};\\
	1 -^{\s} (-1)& = &\{1\};&~~~~&~0 -^{\s}(-1)& = &\{1\};&~~~~&-1 -^{\s} (-1) & = &\{1,0,-1\}.\\
\end{array}
$$

The sign interpretation of multiplication $*^\s $ is as follows:

$$
\begin{array}{lllllllllll}
	1 *^{\s} 1& = &\{1\};&~~~&~1 *^{\s} 0& = &\{0\};&~~~&~1 *^{\s} -1& = &\{-1\};\\
	0 *^{\s} 0 & = &\{0\};&~~~&~0 *^{\s} -1 & = &\{0\};&~~~&-1*^{\s} -1 & = &\{1\};\\
	~& & &\forall x, y\in dom(\s),	&~x *^{\s} y & = & y *^{\s} x.& & & \\
\end{array}
$$
Finally, $/^\s$ correspond to the following table:
$$
\begin{array}{lllllllllll}
	1 /^{\s} 1& = &\{1\};&~~~~&~0 /^{\s} 1 & = &\{0\};&~~~~&-1 /^{\s} 1 & = &\{-1\};\\
	1 /^{\s} 0& = &\emptyset;&~~~~&~0 -^{\s} 0 & = &\emptyset;&~~~~&-1 -^{\s} 0 & = &\emptyset;\\
	1 /^{\s} (-1)& = &\{1\};&~~~~&~0 /^{\s}(-1)& = &\{0\};&~~~~&-1 /^{\s} (-1) & = &\{1\}.\\
\end{array}
$$

\subsection{Differential sign interpretation of arithmetic operators}\label{ap:delta9}
We now give the interpretation for the differential sign structure.
The interpretation of addition $+^{\slope} $ is given as follows:
$$(v_1,\delta_1) +^{\slope} (v_2,\delta_2) = \{(v,\delta)\mid v \in v1 +^\s v2 \text{ and } \delta \in \delta_1 +^\s \delta_2\}$$
The interpretation of multiplication $\mult^{\slope} $ is commutative and as follows:
\begin{align*}
	(0,\rightarrow) \mult^{\slope} x  = \{(0,\rightarrow)\}, ~~~& (0,\uparrow) \mult^{\slope} (0,\uparrow)  = \{(0,\uparrow)\}, ~~~ (0,\downarrow) \mult^{\slope} (0,\downarrow) = \{(0,\uparrow)\},\\
	&(0,\uparrow) \mult^{\slope} (0,\downarrow) = \{(0,\downarrow)\}.
\end{align*}
For any other pattern, we have
$$
(v_1,\delta_1)\mult^{\slope} (v_2,\delta_2) =\{(v,\delta) \mid v \in v_1 \mult^\s v_2; \delta \in (v_1 \mult^\s \delta_2) +^\s (\delta_1 \mult^\s v_2)\}.
$$
The interpretation of the minus operator corresponds exactly to the interpretation of $x + ((-1) \mult y)$, 
i.e. 
$$\forall x,y \in \Sset_d,~~ x-^{\slope}y = \crochet{v_x + ((-1) \mult v_y)}{[v_x/x;v_y/y],\slope} .$$
To express the last operator $/^{\slope}$, we will first express $1^{\slope}/^{\slope} x$ for all $x$.
$$
\begin{array}{lllllllllll}
	(1,\rightarrow)/^{\slope}_{set} (1, \uparrow) &=&\{(1,\downarrow)\};&~~&(1,\rightarrow)/^{\slope}_{set}
	(1, \rightarrow) &=&\{(1,\rightarrow)\};&~~&	(1,\rightarrow)/^{\slope}_{set} (1, \downarrow) &=&\{(1,\uparrow)\}; \\
	(1,\rightarrow)/^{\slope}_{set} (-1, \uparrow) &=&\{(-1,\downarrow)\};&~~&	(1,\rightarrow)/^{\slope}_{set} 
	(-1, \rightarrow) &=&\{(-1,\rightarrow)\};&~~&	(1,\rightarrow)/^{\slope}_{set}(- 1, \downarrow) &=&\{(-1,\uparrow)\}.\\
\end{array}			
$$
The interpretation of division $x /^{\slope}y$ is the result of $x \mult (1 / y)$. 
\subsection{Summary of notation}
\begin{tabular}{ l r l }
	Symbol &Reference & Signification \\
	\hline
	$\frac{ d x}{d t}$&Example \ref{ex:intro} &The derivative of $x$\\
	\hline
	$\Sigma$ &Definition \ref{def: signature}& A ranked signature \\
	\hline
	$\Consts$ &Definition \ref{def: signature}& A set of constants in a given ranked signature $\Sigma$ \\
	\hline 
	$\Sigma^{(n)}$ &Definition \ref{def: signature}& The set of n-ary operators of the ranked signature $\Sigma$ \\
	\hline 
	$\odot$  &Definition \ref{def: structure} & An operator \\
	\hline		
	
	$k$ &Definition \ref{def: structure} & A constant in $\Consts$ \\
	\hline 
	$\struct$ &Definition \ref{def: structure}& A relational structure\\
	\hline
	$\dom{\cdot}$ &Definition \ref{def: structure}& The domain of a relational structure\\
	\hline 
	$\Sigma_{\mathrm{arith}}$ &Page \pageref{def:Rarith} & The ranked signature composed of arithmetic operators \\
	\hline
	$\Rarith$ &Definition \ref{def:Rarith} & The structure of real numbers over the signature $\Sigma_{\mathrm{arith}}$\\
	\hline 
	$\Symbs$ &Definition \ref{def: actualsignature}& A ranked signature containing $\Sigma_{\mathrm{arith}}$\\ 
	\hline
	$\Romega$ &Definition \ref{def:Romega}& The structure of the real numbers over the signature $\Symbs$\\
	\hline
	$\R$ & Page \pageref{page:fun}& The set of real numbers and by abuse of notation $\Romega$\\
	\hline 
	$\dot{x}$ & Page \pageref{page:fun}& The unary derivative operator applied on $x$\\
	\hline
	$\dot{\Symbs}$ & Page \pageref{page:fun}& The signature obtained from $\Symbs$ by adding the unary operator $(\dot{~})$\\
	\hline 
	$\Fun$ &Definition \ref{def:fun}& The structure of functions over $\dot{\Symbs}$\\
	\hline 
	$\DomF{f}$ &  Page \pageref{page:fun}& The domain of the function $f$\\
	\hline
	$\idd$ & Example \ref{ex:fun}& The identity function \\
	\hline
	$\Sset$ &Page  \pageref{page:signset} & The set of signs $\{-1,0,1\}$\\
	\hline
	$\homo$ &Page  \pageref{page:signset} & The homomorphism transforming real values into their sign\\
	\hline
	$\s$ &Definition \ref{def:signstruct}& The structure of signs over $\Symbs$\\
	\hline
	$\mathcal{V}$&Page \pageref{para: expr} & The set of all variables\\
	\hline
	$V$ &Page \pageref{para: expr} & A set of variables, subset of $\mathcal{V}$\\
	\hline
	$\Expr[\Sigma](V)$ & Page \pageref{para: expr}& The set of expression over the signature $\Sigma$ and the set of variables $V$\\
	\hline
	$e$ & Page \pageref{para: expr}& An arbitrary expression \\
	\hline 
	$\alpha$ & Page \pageref{para: expr} & An assignment of variables, which can also be considered as a state of the system \\ 
	\hline
	$\crochet{e}{\alpha,\struct}$ &Page \pageref{para: expr} & The interpretation of an expression $e$ in $\struct$ with the assignment $\alpha$ \\
	\hline
	$\fexpr$ &Page \pageref{para: expr} & The function over $\R$ corresponding to the expression $e$ \\ 
	\hline
	$\formulas[\Sigma]{V}$ & Page \pageref{para: formulas}& The set of formulas with operators and constants in $\Sigma$ and variables in $V$\\
	\hline 
	$\phi,\psi$ &Page \pageref{para: formulas} &A formula \\
	\hline 
	$\EQUAL$ &Page \pageref{para: formulas} &The equality symbol insides formulas\\ 
	\hline
	$\fv(\cdot)$ &Page \pageref{para: formulas} &The set of free variables in the formula\\
	\hline
	$\sol[\struct](\phi)$ &Page \pageref{para: formulas} &The set of assignments satisfying $\phi$ in the structure $\struct$ \\
	\hline 
	$S$ & Page~\pageref{page:rn}&The set of species\\
	\hline
	$r$ & Definition~\ref{def:reaction}&A variable representing a reaction\\
	\hline
	$\Reac_r$ & Definition~\ref{def:reaction}&The multiset of reactants\\
	\hline 
	$\Pro_r$ & Definition~\ref{def:reaction}&The multiset of products\\
	\hline  
	$N$ &Definition~\ref{def:rn} &A reaction network\\
	\hline
	$R$ &Definition~\ref{def:rn} &The set of reactions of a network\\
	\hline
	$\labels(R)$ & Page~\pageref{page:rn suite} &The set of reactions labels\\
	\hline
	$ \Stcoef{r}{x}$ &Page~\pageref{page:rn suite} & The stoichiometry of the species $x$ in the reaction $r$\\
	\hline		
	
	$\ADE{N}$ &Definition~\ref{def:totalode}& The differential equation induced by $N$ \\
	\hline
	$\constr{N}$ &Definition~\ref{def: constraints}& The evolutions constraints induced by $N$ \\
	\hline 
	$\SuccE(N)$ &Definition~\ref{def: euler-succ} &The Euler successor relation with step size $\delta$ for the reaction network $N$ \\
	\hline
	$\SuccS(N) $ &Definition~\ref{def: sign-succ} &The Sign successor relation for the reaction network $N$ \\
	\hline
	$\ATrans{h}{f}$ &Definition~\ref{def:seq} & The set of transitions abstracted from $f$ through $h$ \\
	\hline
	$\solrond(\phi)$ & Page \pageref{page:abstractfunctions}& The set of solution of $\phi$ which store derivatives in new variables \\
	\hline
	$\nextstep(N)$ &Definition~\ref{def: next}& The set of abstract transitions obtained from solutions of $\ADE{N}$ \\
	\hline
	
\end{tabular}

\begin{tabular}{l r l}
	Symbols &Reference & Signification \\	
	\hline	
	$\Absdiff{\homo}{f}(t)$ &Definition~\ref{def: diff}& The differential sign of $f$ at time $t$\\
	
	\hline
	$\slopeset$ & Definition~\ref{def:absimage}& $\{-1,0,1\}\times \{-1,0,1\}$ also denoted $\{-1,0,1\}\times \{\downarrow, \rightarrow, \uparrow\}$ \\
	\hline 
	$\Absimage{f}(t)$ &Definition~\ref{def:absimage} & The pair given by the sign and the differential sign of $f$ at time $t$  \\
	\hline 
	&  & The differential equation such that species in $S_1$ are fixed and\\
	$\fix(N)$ &Definition~\ref{def:fix} & others evolve according the dynamics induced by $N$ \\
	\hline
	$\mathcal{O}$ & Page~\pageref{page:delay}& An arbitrary open set in $\R$ \\
	\hline
	$\cSeg$ &Definition~\ref{def: delays}& The union of delays of $f$\\
	\hline
	$\causalsolution(N)$ &Definition~\ref{def: ccpred}& The set of causal continuous predictions of $N$\\
	\hline
	$\rondcnextstep[\slopeset](N)$ &Definition~\ref{def: cnext}& The set of transitions abstracted from $\causalsolution(N)$ to the domain $\slopeset$\\
	\hline 
	$\slope$ &Definition~\ref{def: slope} & The structure of differential signs\\
	\hline 
	$\downarrow, \rightarrow, \uparrow$ & Page~\pageref{page:diffstructure}&The sign of the differential corresponding to the values $-1, 0$ and $1$\\ 
	\hline
	$(s,\delta)$ & Page~\pageref{page:continuoussucc}& A pair in $\slopeset$\\
	\hline
	$\Continue$ &Definition~\ref{def:continuity}&The set of transitions that may be abstracted from a continuous function \\
	\hline
	$\Continue_{set}$ &Page~\pageref{page:continuoussucc} & The functional representation of the relation $\Continue$\\
	\hline		
	
	$\Succ(N)$&Definition~\ref{def:diff-succ}& The differential sign successor relation for the reaction network $N$\\
	\hline 
\end{tabular}

\section{Existence of solutions}
In this appendix, we establish the existence of solutions for each family of differential equations considered in the paper, and in particular the existence of a continuous causal simulation. A difficulty is that the differential sign abstraction is not always well defined, for instance when a function oscillates infinitely often. To overcome this issue, we restrict our attention to analytic functions which, as we will show later, always admit a differential sign abstraction. Since this notion is not used in the main body of the paper, we briefly recall it here together with some useful properties. For a comprehensive treatment, we refer the reader to \cite{hille2002analytic}.

\begin{definition}[Real-analytic function]
	Let $ \mathcal{O}\subset \mathbb{R}^n$ be an open set. A function
	$f :  \mathcal{O}\to \mathbb{R}$
	is said to be \emph{real-analytic} if for every point $x_0 \in \mathcal{O}$, there exists a neighborhood $V \subset \mathcal{O}$ of $x_0$ and a convergent power series
	$$
	\sum_{\alpha \in \mathbb{N}^n} a_\alpha (x - x_0)^\alpha
	$$
	such that for all $x \in V$,
	$$
	f(x) = \sum_{\alpha \in \mathbb{N}^n} a_\alpha (x - x_0)^\alpha.
	$$
	Here $\alpha = (\alpha_1,\dots,\alpha_n)$ is a multi-index and
	$$
	(x - x_0)^\alpha = \prod_{i=1}^n (x_i - x_{0,i})^{\alpha_i}.
	$$
\end{definition}		

The existence of analytic solutions is established by the Cauchy--Kovalevskaya theorem.

\begin{theorem}[Cauchy--Kovalevskaya, ODE case]\label{th: cauchy-K}
	Let $ \mathcal{O}\subset \mathbb{R} \times \mathbb{R}^n$ be an open set, 
	$
	F :  \mathcal{O}\to \mathbb{R}^n
	$
	be an analytic function,	
	and $(t_0, x_0) \in \mathcal{O}$ be a time point and the state of the system . Then there exists a neighbourhood $I$ of $t_0$
	and a unique function
	$
	f : I \to \mathbb{R}^n
	$
	which is real-analytic and satisfies
	$$
	\begin{cases}
		\dot{f}(t) = F(t, f(t)), \\
		f(t_0) = x_0.
	\end{cases}
	$$	
	Moreover, the solution $f(t)$ is analytic on $I$.
\end{theorem}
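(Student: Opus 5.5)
We plan to prove the theorem by complexification and Picard iteration, using the fact that a uniform limit of holomorphic functions is holomorphic. First we reduce to a normalised setting. Translating time and state, we may assume $t_0=0$ and $x_0=0$. Since $F$ is real-analytic at $(0,0)$, its power series converges on a small real box. The same series therefore defines a holomorphic extension $\tilde F$ on a complex polydisk $P=\{(t,z)\in\mathbb{C}\times\mathbb{C}^n : |t|\le a,\ \max_i|z_i|\le b\}$, for some $a,b>0$. This extension is real-valued on real arguments and is bounded there, say $|\tilde F|\le M$. By the Cauchy estimates on a slightly larger polydisk, $\tilde F$ is also Lipschitz in $z$ on $P$, with some constant $L$.

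Next we set up the iteration on a complex disk. Let $D_h=\{t\in\mathbb{C} : |t|\le h\}$ and let $X$ be the set of continuous maps $g : D_h\to\mathbb{C}^n$ that are holomorphic in the interior and satisfy $\max_i|g_i|\le b$. With the sup norm, $X$ is a closed subset of a Banach space, because uniform limits of holomorphic functions are holomorphic. We define
$$\Phi(g)(t)=\int_{[0,t]}\tilde F(s,g(s))\,ds,$$
integrating along the straight segment from $0$ to $t$. This integral equals a holomorphic primitive, since $D_h$ is convex. We then choose $h\le a$ with $hM\le b$ and $hL<1$. With this choice $\Phi$ maps $X$ into $X$, because $|\Phi(g)(t)|\le |t|M\le b$, and $\Phi$ is a contraction with constant $hL$. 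By the Banach fixed point theorem there is a unique fixed point $f\in X$. It satisfies $f(0)=0$ and $f'(t)=\tilde F(t,f(t))$ on the interior of $D_h$.

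We then return to the reals. Starting the iteration from $g_0\equiv 0$, each iterate takes real values on $I=]-h,h[$, because $\tilde F$ is real on real arguments and the integral is taken along a real segment. Hence the limit $f$ is real-valued on $I$. Being the restriction of a holomorphic function, $f$ is real-analytic on $I$, and it solves $\dot f=F(t,f)$ with $f(0)=0$. For uniqueness, note that analyticity gives local Lipschitz continuity of $F$. The Picard--Lindelöf argument, via a Gronwall estimate, then shows that any $C^1$ solution on a neighbourhood of $t_0$ coincides with $f$ near $t_0$; in particular this holds for any analytic solution.

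The main obstacle is the first step: obtaining the holomorphic extension to a polydisk together with the uniform bounds $M$ and $L$. This requires the convergence of the multivariate power series on a polydisk, Abel's lemma, and the Cauchy estimates. Once those are in place, the choice of $h$ is routine. If the complex route turned out awkward, we would fall back on the classical method of majorants. In that approach one solves formally for the Taylor coefficients, which are polynomials with nonnegative coefficients in those of $F$. One then dominates them by the coefficients of the explicitly solvable scalar equation $\dot y=M/(1-(t+ny)/r)$, whose solution is analytic near $0$.
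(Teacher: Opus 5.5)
The paper does not prove this statement. It quotes the classical Cauchy (Cauchy--Kovalevskaya) theorem for analytic ODEs, refers to \cite{hille2002analytic}, and uses it only as a black box in the appendix to obtain analytic solutions of $\mathsf{ADE}(N)$ and of the equations with fixed species. There is therefore no argument in the paper to compare yours with. Judged on its own, your proof is correct.

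Your route is one of the two standard textbook proofs: complexify $F$ on a polydisk, run Picard iteration in the space of bounded holomorphic maps on $D_h$, and recover real-valuedness from the real Taylor coefficients of $\tilde F$. The method of majorants, which you keep in reserve, is the other standard proof. Your majorant $M/(1-(t+x_1+\dots+x_n)/r)$, with the symmetric reduction to $\dot y=M/(1-(t+ny)/r)$, is the right one. The holomorphic route gives existence, analyticity and an explicit radius at once, and it reuses the real Picard--Lindel\"of machinery. The majorant route stays entirely on the real side and gives uniqueness among analytic solutions immediately, because the recursion forces the Taylor coefficients.

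Two details should be made explicit.

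\emph{The bound $M$ and holomorphy near $P$.} $M$ must bound $\tilde F$ on the whole complex polydisk $P$, not only at real arguments, since your iterates take complex values. Also, $\tilde F$ must be holomorphic on a neighbourhood of $P$, so that $s\mapsto\tilde F(s,g(s))$ stays holomorphic where $|g_i(s)|=b$. Your choice of a slightly larger polydisk for the Cauchy estimates already gives both.

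\emph{Uniqueness on all of $I$.} The statement asks for uniqueness on $I$, but you only assert that any solution coincides with $f$ near $t_0$. There are two easy ways to close this.
\begin{itemize}
\item Combine your local uniqueness with the identity theorem for real-analytic functions on the interval $I$.
\item Alternatively, in your normalised coordinates, show by a continuity argument that any solution $g$ on $I=\,]-h,h[$ satisfies $\max_i|g_i(t)|\le M|t|<b$. Then $g$ never leaves the box where $F$ is $L$-Lipschitz, and Gronwall gives $g=f$ on all of $I$.
\end{itemize}
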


\begin{property}[Linear combination]\label{prop: linear}
	Linear combinations of analytic functions are still analytic functions. 
\end{property}

\begin{property}[Existence of differential abstraction]\label{prop:existIsd}
	Let  $\mathcal{O}\subseteq \R$ be an open set and $f:\mathcal{O} \to \R$ be an analytic function.
	For all $t \in \mathcal{O}$, $\Absdiff{\homo}{f} (t)$ exists.
\end{property}
\begin{proof}
	
	By definition of analytic functions, it exists a sequence $(\alpha_i)_{i\in \Nat} \in \R^\Nat$
	such that for all $x\in \mathcal{O}$, 
	$$	f(x) = \sum_{i \in \mathbb{N}} \alpha_i (x - t)^i.$$
	
	Let us recall that 
	$$\Absdiff{\homo}{f}(t)= \lim_{x\to t^+} \homo(f(x)-f(t))$$
	and 
	\begin{align}\label{appendice-1}
		\lim_{x\to t^+} \left( \sum_{i > 0} \alpha_{i}(x - t)^i \right) = 0.
	\end{align}
	We can deduce that $f(t) = \alpha_0$, and
	$$\Absdiff{\homo}{f}(t)= \lim_{x\to t^+} \homo \left(\sum_{i >0} \alpha_i (x - t)^i\right).$$
	If for all $i\in \Nat\setminus\{0\}$, $\alpha_i=0$, then $f$ is constant. Therefore, we have $\Absdiff{\homo}{f} (t) = 0$. 
	Otherwise, let $k\in \Nat\setminus\{0\}$ be the smallest integer non-zero such that $\alpha_k \neq 0$. According to (\ref{appendice-1}), it exists $ \epsilon >0$ such that $ \forall x \in ]t,t+\epsilon[$, we have:
	\begin{align*}
		\left| \sum_{i >0} \alpha_{i+k} (x - t)^i \right| &< \left| \alpha_k \right|\\
		\Rightarrow~~ \left| \sum_{i >0} \alpha_{i+k} (x - t)^ {i+k}  \right| &< \left| \alpha_k (x - t)^ {k}\right|\\
		\Rightarrow	~~  h_{\s} \left(\alpha_k (x - t)^ {k} + \sum_{i> 0 } \alpha_{i+k} (x - t)^ {i+k}\right) &= h_{\s}\left(\alpha_k(x - t)^ {k}\right)\\
		\Rightarrow ~~  h_{\s} (f(x)-f(t)) &= h_{\s}\left(\alpha_k(x - t)^ {k}\right)\\
		\Rightarrow ~~  h_{\s} (f(x)-f(t)) &= h_{\s}(\alpha_k)
	\end{align*}
	To conclude, in the case of analytic functions, $\Absdiff{\homo}{f}$ is defined on the domain of definition. 
\end{proof}
In order to simplify the proof of existence of solution of $\ADE{\cdot}$, in the following
we will consider the differential equation where expressions appear in the place of reaction labels:
$$ 
\mathsf{ODE}(N):=  \bigwedge_{x \in S}  \left(\dot{x} \EQUAL \sum_{r\in \labels (R)} \Stcoef{r}{x}\mult e_r  \right).
$$ To exploit $\mathsf{ODE}(N)$, we need the following property of substitution.  
\begin{property}[Substitution]\label{prop: subs}		
	
	Let $N$ be a reaction network, and  $f\in \sol(\mathsf{ODE}(N))$ a solution of its ordinary differential equation.
	The function $f': (S\cup\labels(R))\to (\R\to\R)$ such that $\restr{f'}{S}= f$ and 
	for all $r\in \labels(R)$, $f'(r) = \fexpr[e_r] \circ f$, is a solution of $\ADE{N}$,i.e. $ f'\in \sol(\ADE{N})$.	
\end{property}
\begin{proof}[SKETCH]
	The interpretation of an expression in the structure of real functions give always at most a singleton. Therefore, the interpretation of $\EQUAL$ 
	($\crochet{e_1\EQUAL e_2}{\struct,\alpha}$ is true only if $\crochet{e_1}{\struct,\alpha} \cap \crochet{e_2}{\struct,\alpha}\neq \emptyset$ ) is equivalent to the standard equality, and we can substitute.
\end{proof}
\begin{property}[Existence of analytic solutions of $\ADE{N}$]\label{prop: exists ADE}
	Let  $ N = (S,R) $ be a reaction network such that for all $r\in \labels(R) $, $\fexpr[e_r]$ is analytic. Let $t_0\in \R_+$ be a time point and $\eta \in \R_+^S$ be a state of the system.
	There exists a neighbourhood $I$ of $t_0$ and a unique function $f: I \to \R^S$ that satisfies:
	$$f \in \sol(\ADE{N}) ~ \land  ~ \restr{f(t_0)}{S} = \eta.$$
	Moreover, $f$ is analytic. 
\end{property}
\begin{proof}
	We will work with the differential equation $\mathsf{ODE}(N)$, as from its solutions we can easily construct the solution of $\ADE{N}$ (Proposition~\ref{prop: subs}).
	If each expression in the reaction network is analytic then for all $x\in S$, $\fexpr[\sum_{r\in \labels (R)} \Stcoef{r}{x}\mult e_r]$ is analytic too by Proposition~\ref{prop: linear}. Therefore, by Theorem~\ref{th: cauchy-K}, there exist a neighbourhood $I$ of $t_0$ and a unique function $f$ such that $f \in \sol(\ADE{N})$ and $\restr{f(t_0)}{S} = \eta.$
\end{proof}
\begin{property}[Existence solutions of $\fix(N)$]\label{prop: exists FIX}		
	
	Let  $ N = (S,R) $ be a reaction network such that for all $r\in \labels(R) $, $\fexpr[e_r]$ is analytic.
	Let $t_0 \in \R_+$ be a time point, $\eta \in \R_+^S$ be a state of the system and $S_1\subseteq S$ be a set of fixed species.
	There exists a neighbourhood $I$ of $t_0$ and a unique function $f: I \to \R^S$ that satisfies:
	$$f \in \FIX{S_1}{N} ~\land ~ \restr{f(t_0)}{S} = \eta.$$
	Moreover, $f$ is analytic. 
\end{property}
\begin{proof}[SKETCH]
	Given the network $N$, we can build a network $N'$ such that $\fix(N) = \ADE{N'}$. 
	$N'$ is obtained  by removing species of $S_1$ from multisets of reactants and products of each reaction, without changing the kinetic expressions.  	
	Therefore, according to Property~\ref{prop: exists ADE}, such a solution exists. 
	
\end{proof}

\begin{property}[Existence solutions in $\causalsolution(N)$]\label{prop: exists CSOL}	
	Let  $ N = (S,R) $ be a reaction network such that for all $r\in \labels(R) $, $\fexpr[e_r]$ is analytic.
	Let $t_0 \in \R_+$ be a time point, and $\eta \in \R_+^S$ be a state of the system. There exists a neighbourhood $I$ of $t_0$ and a unique function $f: I \to \R^S$ that satisfies:
	$$f\in \causalsolution(N) ~\land~ \restr{f(t_0)}{S} = \eta.$$ 
	Moreover, $f$ is analytic. 
\end{property}
\begin{proof}
	Let us consider $f'\in \rond{sol}(\ADE{N})$, such that
	$$
	\restr{f'(t_0)}{S} = \eta
	$$
	and assume that $f'$ is analytic (whose existence follows from Property~\ref{prop: exists ADE}).
	
	By Property~\ref{prop:existIsd}, $\Absimage{f'}$ is defined on the same domain as $f'$.
	
	If $f'$ is a causal match on an open neighborhood of $t_0$, then it is continuous and causal, trivially consistent, and has no delay on this neighborhood. Therefore, it is a causal continuous prediction on this set.
	
	Otherwise, $f'$ is not a causal match at $t_0$. Let $S_1$ be the set of species whose derivative is zero at $t_0$. Let us consider
	$$
	f'' \in \FIX{S_1}{N}
	$$
	such that
	$$
	\restr{f''(t_0)}{S} = \eta
	$$
	and assume that $f''$ is analytic (whose existence follows from Property~\ref{prop: exists FIX}).
	By the analytic characteristic of $f'$, it exists $\epsilon > 0$ such that there is no causal mismatch for $f'$ on $]t_0- \epsilon, t_0[$. Therefore we consider the function $f$ such that $\restr{f}{]t_0- \epsilon, t_0[} = \restr{f'}{]t_0- \epsilon, t_0[} $ and for all $t\in \DomF{f''}\cap [t_0,+\infty[$, $f(t)= f''(t)$.
	We now show that there exists $t_1 > t_0$ such that
	$$
	\restr{f}{]t_0-\epsilon,t_1[} \in \causalsolution(N).
	$$
	
	\begin{itemize}
		\item\emph{[continuity]} Since $f'$  and $f''$ are both analytic and $f'(t_0) = f''(t_0)$,  $f$ is continuous.
		
		\item  \emph{[admissible delays]} Let $x\in S\cup \rond{S}\cup \labels(R)$ be a variable of the network. Since $f''_x$ is analytic, $\Absdiff{\homo}{f''_x}$ is defined on $\DomF{f''}$. By definition of the abstraction of a differential, if $\Absdiff{\homo}{f''_x}(t_0)$ exists, then there exists $\epsilon > 0$ such that for all $t \in [t_0,t_0+\epsilon[$,
		$$
		\Absdiff{\homo}{f''_x}(t) = \Absdiff{\homo}{f''_x}(t_0).
		$$
		
		Moreover, if $\homo \circ f''_x(t_0) \in \{-1,1\}$, then the sign remains constant on some neighborhood of $t_0$. 
		
		Otherwise $\homo \circ f''_x(t_0)=0,$ and
		the sign may change arbitrarily close to $t_0$.
		Therefore, for each variable $x$, one of the following holds:
		\begin{enumerate}
			\item there exists $t_{1,x} > t_0$ such that $\Absimage{f''_x}$ is constant on $[t_0,t_{1,x}[$;
			\item $f''_x(t_0)=0$ and $f''_x$ is not constant. Since $\Absdiff{\homo}{f''_x}(t_0)$ exists, it exists $t_{1,x}>t_0$, such that $f''_x$ is either strictly positive or strictly negative on $]t_0,t_{1,x}]$.
		\end{enumerate}
		
		In both cases, there exists $t_{1,x}> t_0$ such that $\Absimage{f''_x}$ is constant on $]t_0,t_{1,x}[$. Let $t_1$ be the minimum of all $t_{1,x}$. On the interval $]t_0,t_1]$,  $\Absimage{f_x}$ for each species $x$ is constant.
		Moreover, we chose $S_1$ as the set of species with a null derivative at $t_0$.

		\item \emph{[consistency]} By construction $f$ is consistent. 
		
		\item \emph{[causality]} We choose $t_0-\epsilon$ in a way that there is no mismatch on 
		$]t_0-\epsilon,t_0[$. Since we chose $S_1$ as the set of species with null derivative, 
		$f$  is a causal match at $t_0$.  Finally on $]t_0,t_1]$, every species has a constant differential abstraction therefore there is no mismatch on $]t_0,t_1[$.

	\end{itemize}
	To conclude $
	\restr{f}{]t_0-\epsilon,t_1[} \in \causalsolution(N).
	$
\end{proof}
	
\end{document}